\renewcommand{\maketag@@@}[1]{\hbox{\m@th\normalsize\normalfont#1}}%
\newtheoremstyle{mystyle}{}{}{}{}{}{: }{0pt}{\indent \it{\thmname{#1}\thmnumber{ #2}\thmnote{#3}}}
\theoremstyle{mystyle}
\newtheorem{lemma}{Lemma}
\def\BibTeX{{\rm B\kern-.05em{\sc i\kern-.025em b}\kern-.08em
    T\kern-.1667em\lower.7ex\hbox{E}\kern-.125emX}}
\begin{document}

\title{Model-Driven Deep Learning Enhanced Joint Beamforming and Mode Switching for RDARS-Aided MIMO Systems}
%\author{Chengwang~Ji,~\IEEEmembership{Member,~IEEE,}
%        John~Doe,~\IEEEmembership{Fellow,~OSA,}
%        and~Jane~Doe,~\IEEEmembership{Life~Fellow,~IEEE}% <-this % stops a space
%\thanks{M. Shell was with the Department
%of Electrical and Computer Engineering, Georgia Institute of Technology, Atlanta,
%GA, 30332 USA e-mail: (see http://www.michaelshell.org/contact.html).}% <-this % stops a space
%\thanks{J. Doe and J. Doe are with Anonymous University.}% <-this % stops a space
%\thanks{Manuscript received April 19, 2005; revised August 26, 2015.}}
\author{Chengwang~Ji, Kehui~Li, Haiquan~Lu, Qiaoyan~Peng, Jintao~Wang, Feifei~Gao,~\IEEEmembership{Fellow,~IEEE}, and~Shaodan~Ma,~\IEEEmembership{Senior Member,~IEEE}
\thanks{This article was presented in part at the IEEE ICC 2025\cite{ji_ICC_2025}.

C. Ji, K. Li, H. Lu, Q. Peng, J. Wang and S. Ma are with the State Key Laboratory of Internet of Things for Smart City and the Department of Electrical and Computer Engineering, University of Macau, Macao SAR, China (e-mails: ji.chengwang@connect.um.edu.mo; yc47997@um.edu.mo; haiq\_lu@163.com; qiaoyan.peng@um.edu.mo; wang.jintao@um.edu.mo; shaodanma@um.edu.mo). 
F. Gao is with Department of Automation, Tsinghua
University, Beijing 100084, China (email: feifeigao@ieee.org).}
\vspace{-15pt}
}
\maketitle
% As a general rule, do not put math, special symbols or citations
% in the abstract or keywords.
\begin{abstract}
Reconfigurable distributed antenna and reflecting
surface (RDARS) is a promising architecture for future sixth-generation (6G) wireless networks. In particular, the dynamic working mode configuration for the RDARS-aided system brings an extra selection gain compared to the existing reconfigurable intelligent surface (RIS)-aided system and distributed antenna system (DAS). In this paper, we consider the RDARS-aided downlink multiple-input multiple-output (MIMO) system and aim to maximize the weighted sum rate (WSR) by jointly optimizing the beamforming matrices at the based station (BS) and RDARS, as well as mode switching matrix at RDARS. The optimization problem is challenging to be solved due to the non-convex objective function and mixed integer binary constraint. To this end, a penalty term-based weight minimum mean square error (PWM) algorithm is proposed by integrating the majorization-minimization (MM) and weight minimum mean square error (WMMSE) methods. 
To further escape the local optimum point in the PWM algorithm, a model-driven DL method is integrated into this algorithm, where the key variables related to the convergence of PWM algorithm are trained to accelerate the convergence speed and improve the system performance. Simulation results are provided to show that the PWM-based beamforming network (PWM-BFNet) can reduce the number of iterations by half and achieve performance improvements of 26.53\% and 103.2\% at the scenarios of high total transmit power and a large number of RDARS transmit elements (TEs), respectively.
\end{abstract}

\begin{IEEEkeywords}
Reconfigurable distributed antennas and reflecting surface (RDARS), joint beamforming and mode switching, PWM, model-driven.
\end{IEEEkeywords}
\section{Introduction}
The demand of enhanced wireless communication performance has driven the development of various critical technologies. For example, by exploiting multiple antennas to harness spatial diversity and multiplexing gains, massive multi-input and multi-output (MIMO) has become the key technology in current fifth-generation (5G) and future sixth-generation (6G) mobile communication networks \cite{xue_survey, haiquan_MIMO}. 
To mitigate the practical issues of millimeter wave (mmWave) and terahertz (THz) high-frequency bands, such as severe path loss and limited material penetration capabilities, distributed antenna systems have been proposed, where multiple geographically distributed antennas coordinate to transmit signals, so as to significantly enhance the spatial multiplexing capabilities in mmWave communications \cite{xue_survey, DAS_TSP, THz_jiang, THz_sota}. However, each distributed antenna requires a dedicated radio-frequency (RF) chain, which leads to high hardware cost and energy consumption \cite{Mohammadi_DAS_high_cost}. To address this issue, reconfigurable intelligent surface (RIS)/intelligent reflecting surface (IRS) has been proposed as a cost-effective and energy-efficient technology. Specifically, RIS is able to create a virtual link to enable the incident signal towards the desired area, whereas its performance suffers from the multiplicative fading effect, thus requiring a large number of reflecting elements to obtain the high reflection gain \cite{YuanweiLiu_RIS, haiquan_RIS, Ahmed_channel_model, ChengzhiMa_ANewArchi, Trinh_Multi_fading_RIS, Zhang_active_RIS, peng_semiRIS}. Besides, the efficient control and fast phase adjustment for RIS are two challenging problems to be solved. On the one hand, the wireless link to control RIS occupies the frequency and space resources. On the other hand, the synchronization problem between the base station (BS) and RIS needs to be solved. In particular, the dedicated wire link can be established to transmit the control and passive phase signals, with reduced latency and resource consumption. Nonetheless, this results in higher hardware costs \cite{Wang_RDARS}.

Recently, reconfigurable distributed antennas and reflecting surface (RDARS) is proposed as a promising technology for 6G mmWave communications \cite{ChengzhiMa_RDARS, ji2025reconfigurable, Wang_RDARS, ChengzhiMa_ANewArchi, jintao_ISAC_RDARS, zhang_RDARS}. By integrating the benefits of the RIS and distributed antennas, RDARS showcases the potential in many aspects, such as improving system capacity \cite{ji2025reconfigurable}, overcoming multiplicative fading \cite{zhang_RDARS}, and enhancing communication reliability \cite{Wang_RDARS}. Specifically, RDARS is a novel type of programmable metasurface composed of reconfigurable elements. The working mode of each element can be dynamically adjusted between the connection mode and reflection mode via the RDARS controller \cite{ChengzhiMa_ANewArchi}. The element working in the connection mode is connected with the BS with a cable or fiber, which can transmit or receive signal as a distributed antenna. On the other hand, the element working in the reflection mode functions as the passive element as in the conventional RIS, which can reflect the incident signal to the desired direction. 
Furthermore, compared to the conventional RIS, the additional selection gain can be achieved via a switching network \cite{Wang_RDARS}.
. 

Due to the promising advantages of RDARS, many research efforts have been devoted to this direction and demonstrated its superiority over RIS-aided systems and DAS via the theoretical performance analysis and prototype experiment \cite{ChengzhiMa_RDARS, ji2025reconfigurable, Wang_RDARS, ChengzhiMa_ANewArchi, jintao_ISAC_RDARS, zhang_RDARS}. 
To reduce the overhead of channel estimation
and computational complexity of the system, the authors in \cite{ChengzhiMa_RDARS} proposed a two-timescale transceiver design.
In \cite{ji2025reconfigurable}, a reconfigurable codebook and low overhead beam training tailored for RDARS were proposed, so as to cater to practical communication scenarios. 
To improve transmission reliability, the mean square error (MSE) was minimized in the uplink MIMO communication system \cite{Wang_RDARS}.
Furthermore, the authors in \cite{ChengzhiMa_ANewArchi} developed a RDARS-aided uplink system prototype, and experiments were conducted to verify the improvement of ergodic achievable rate. 
Besides the wireless communication, an integrated sensing and communication (ISAC) system prototype was developed by leveraging the geometric relationship between BS and RDARS, and a high sensing performance was achieved without incurring communication performance loss \cite{jintao_ISAC_RDARS}. 
In \cite{zhang_RDARS}, the radar output signal-to-noise ratio (SNR) was maximized to improve sensing performance, thanks to the flexible working mode selection brought by RDARS. 

It is worth mentioning that the efficient beamforming design is an important research hotpot in recent years, so as to meet the increasing demand of channel capacity \cite{secure_optimi_Son, WSR_FP_PDD_wang, WSR_FP_BDD_peng, sum_rate_penal_MM_wang, max_min_ADMM_wang}. 
In \cite{secure_optimi_Son}, the secrecy rate was maximized by jointly optimizing hybrid beamforming at
BS and passive beamforming at RIS, where weighted minimum mean-squared error (WMMSE) transformation and
penalty-dual-decomposition (PDD) were utilized to tackle this problem. In \cite{WSR_FP_PDD_wang}, the problem was firstly transformed into its equivalent form by using the fractional programming (FP) method, followed by PDD to maximize the weighted sum rate (WSR) in a double RIS-aided system. By taking both transceiver and RIS hardware impairments into consideration, the WSR maximization problem has been efficiently solved by invoking the block coordinate descent (BCD) framework for an active RIS-assisted system \cite{WSR_FP_BDD_peng}. To investigate the cell-free massive MIMO system performance under both ring and star topologies, a penalty-majorization-minimization (MM)-based distributed beamforming design algorithm was proposed to maximize the achievable sum rate at each BS by decomposing the high-order terms\cite{sum_rate_penal_MM_wang}. Considering the user fairness, the minimum user achievable rate was maximized, and the formulated problem was efficiently solved by a randomized
alternating direction method of multiplier (ADMM) algorithm \cite{max_min_ADMM_wang}. Nonetheless, as the numbers of BS antennas and RIS elements increase, existing beamforming optimization algorithms are faced with high computational complexity, due to the enlarged channel matrix dimension. Besides, these algorithms involve multiple iterations, without guaranteed global optimality, and may lead to a local optimum solution. Moreover, random or fixed initialization points were mainly applied in existing algorithms, which may result in a low-speed convergence behavior.

To tackle these issues, deep learning (DL)-based beamforming design has gained the significant research interest \cite{FL_BF_Ahmet, CNN_BF_shen, DRL_BF_huang, Unsupervised_BF_chen}. In \cite{FL_BF_Ahmet}, the federated learning method was utilized to train the mapping relationship between the channel and beam index, so as to design the beamforming vector. Similarly, parallel convolution neural networks (CNNs) were introduced to reduce the training overhead of beamforming design in \cite{CNN_BF_shen}.
However, the supervised learning relying on the pre-labeled dataset may incur performance degradation. Considering this issue, the sum rate maximization problem was efficiently solved by utilizing deep reinforcement learning (DRL), which embraces the advantages of DL in neural network training and reinforcement learning (RL) algorithms \cite{DRL_BF_huang}. Additionally, an unsupervised learning network was proposed to realize the mapping from channel to beamforming with high computational efficiency in cell-free MIMO systems \cite{Unsupervised_BF_chen}. Note that the aforementioned works are mainly the data-driven methods, which usually have high network training overhead and computational memory requirement. Besides, data-driven methods are difficult to adapt to the complex and dynamic environments, and thus high performance can not be guaranteed. Moreover, data-driven neural networks with multiple hidden layers and the requirement for extensive hyperparameter tuning suffer from limited interpretability.

To overcome these drawbacks, the paradigm of model-driven neural network was proposed by integrating neural networks into the iteration steps in optimization algorithms, or unfolding all iterations into a lay-wise structure \cite{deep_unfolding_alexios, model_driven_he, LowCom_Jie, deep_unfolding_xu, grant_trans_deep_unfolding_sun, deep_unfolding_MO_chen}. 
Specifically, a fixed number of iterations is set to guarantee a fast convergence behavior, unlike the conventional optimization algorithms \cite{model_driven_he}. Moreover, the expert knowledge can be utilized to simplify the network training. For example, a simple solution structure of the active beamforming was introduced in \cite{LowCom_Jie}, which helps to generate a better initialization of WMMSE algorithm. In addition, an approximating matrix inversion was proposed for the MM-based algorithm to reduce the computation complexity in \cite{deep_unfolding_xu}. To reduce the number of trainable parameters, several important parameters related to the specific model are set to be trainable, such as the scaling matrix of matrix inversion operation, the weights of forward-backward splitting, and the step sizes of retraction function in the manifold optimization algorithm \cite{deep_unfolding_xu, grant_trans_deep_unfolding_sun, deep_unfolding_MO_chen}.

Meanwhile, we notice that for a RDARS-aided multi-user communication system, several fundamental issues on WSR maximization remain unsolved. 
First, how to obtain the seamless blend of the optimal beamforming design and RDARS element configurations is still an intractable problem. 
To be specific, one of the challenges in the considered system is the joint design of BS beamforming and RDARS beamforming. Besides, how to achieve effective operation mode configuration remains unknown, due to that the interactively coupled variables. Moreover, the binary mode switching constraint exacerbates the challenges of system design. 
Second, how to obtain the efficient beamforming initialization points? Since inappropriate initialization may result in a low convergence speed and local optimal solution, it is important to develop a proper initialization method for the joint beamforming designs.
Third, how to reduce the computational complexity by embedding the iterations of the proposed optimization algorithm within the neural networks? This problem arises from the high computational complexity suffered by the joint beamforming design, as well as the new mode switching requirement in RDARS systems. Specifically, some key hyper-parameters are critical to intertwine the deep unfolding layer-wise structure with the network to be trained.

In this paper, we investigate the WSR maximization problem in the RDARS-aided mmWave downlink MIMO system, by jointly optimizing the BS beamforming, RDARS beamforming, and mode switching, subject to the total transmit power and binary mode switching constraints. First, we propose a penalty term-based WMMSE (PWM) algorithm. To accelerate the
convergence speed and improve the system performance, a model-driven neural network is next proposed by leveraging expert knowledge related to the active beamforming design and adaptively learning the penalty terms in the mode switching matrix optimization according to the current iteration step in progress.    
The main contributions of this paper are summarized as follows:
\begin{itemize}
    \item  Firstly, we propose an efficient algorithm based on the convex optimization methods for WSR maximization. Specifically, the original optimization problem is equivalently transformed into a more tractable one based on the weighted minimum mean square error (WMMSE)-based algorithm, and the optimal active beamforming is derived in closed-form. Then, the power iteration algorithm is used to meet the unit-modulus constraint in the equivalent problem. Moreover, the binary mode selection constraints are satisfied by introducing the penalty term and majorization minimization (MM) method.
    With the alternative optimization (AO) algorithm, each block of variables is iteratively optimized in an alternate way until convergence is achieved.
    % \item  Secondly, in order to reduce the computational complexity and improve the convergence speed of the above optimization method, we propose a learning-based approach.
    % Specifically, a penalty term-based data generation algorithm with reconfigurable codebook (RCB) is proposed to generate a high-quality dataset. The optimized model switching matrix is obtained by the PWM algorithm, and the labels corresponding to beam indices of active and passive beamforming vectors are then generated by the RCB-based beamforming design algorithm. Subsequently, the data-driven DL is utilized to train the network.
    \item  Secondly, a model-driven DL is integrated into the proposed optimization algorithm to further accelerate the convergence speed and improve the system performance. Specifically, a simple solution structure of active beamforming is introduced for its initialization. The equivalent transmit power and auxiliary terms related to the simple solution structure are set as trainable parameters. Then, an adaptive penalty term is trained by taking the computational complexity of mode switching optimization into consideration. These trainable parameters are trained by the model-driven PWM-based beamforming network (PWM-BFNet).
    \item  Lastly, numerical results demonstrate the superiority of the RDARS architecture, in terms of reducing the number of transmit antennas and transmit power. It is shown that the proposed algorithms are capable of significantly accelerating the convergence speed and increasing the system performance, especially for the case with high transmit power and a large number of RDARS transmit elements (TEs).
\end{itemize}

\textit{Notations:} For a complex vector $\bf x$, $x_i$ represents the $i$-th entry. The elements of vector ${\bf x}[a,b]$ are comprised of the elements of vector ${\bf x}$, beginning with the $a$-th element and ending at the $b$-th element. For a complex matrix $\bf X$, $\mathbf{X}_{[i,j]}$ represents the element in the $i$-th row and $j$-th column. $\operatorname{Tr}(\bf X)$, $||\bf X||$ and $||\mathbf X||_{F}$ denote its trace, the 2-norm and F-norm of $\bf X$, respectively. ${\bf x}^{T}$, ${\bf x}^{ H}$, ${\bf X}^{T}$ and ${\bf X}^{H}$ stand for the transpose and conjugate transpose of vector $\bf x$ and matrix $\bf X$, respectively. 
\section{System Model and Problem Formulation} \label{sec: system model}

\subsection{System Model}
Fig. \ref{fig: system architecture} shows a RDARS-aided MIMO system with a BS and $K$ single-antenna users. The BS is equipped with $N_{\rm{t}}$ antennas, and the RDARS has $N$ elements. 
Each element of RDARS can work in two modes: \textit{connection mode} and \textit{reflection mode}. 
The working mode of each element is determined by a diagonal mode switching matrix ${\bf{A}}\in \mathbb{R}^{N\times N}$ with $\mathbf{A}_{[i, i]} \in \{0,1\}$.
%, where $\bf{A}$ is a $N \times N$ diagonal matrix and $\mathbf{A}_{[i,i]} \in \{0,1\}$ for the $i$-th element. 
To be specific, when $\mathbf{A}_{[i, i]} = 1$, the $i$-th element works in the connection mode and can be regarded as a distributed antenna with the capabilities of transmitting and receiving signal.
Otherwise, $\mathbf{A}_{[i,i]} = 0$ represents that the $i$-th element operates in the reflection mode which functions as a passive element to reflect the incident signals. Let $a$ denote the number of elements working in the connection mode, while the remaining $N-a$ elements operate in the reflection mode.
In other words, $a$ RDARS elements are programmed as the distributed antennas like transmit elements, so as to serve multiple users cooperatively with BS antennas. 
Let $\mathcal{N_{\rm{t}}}=\{1,2,\dots, N_{\rm{t}}\}$, $\mathcal{N}=\{1,2,\dots, N\}$, $\mathcal{A}=\{1,2,\dots, a\}$ and $\mathcal{K}=\{1,2,\dots, K\}$ denote the sets of indices of BS antennas, RDARS elements, connected elements, and users, respectively.

By denoting the transmit symbol vector as $\mathbf{s} \in  \mathbb{C}^{K \times 1}$ with ${\mathbb{E}\{\mathbf{s}\mathbf{s}^H \}}={\bf{I}}_{K}$, the combined transmit signal from the BS and RDARS elements in the connection mode is
\begin{align}\label{equ: transmit signal }
 \mathbf{x} &= 
 \begin{bmatrix}
{\mathbf{W}}_{\rm{b}} \\
{\mathbf{W}}_{\rm{r}} 
\end{bmatrix}
\mathbf{s} =\mathbf{F} \mathbf{s},
\end{align}
where ${\mathbf{W}}_{\rm{b}}=\left[\mathbf{w}_{{\rm{b}},1},\mathbf{w}_{{\rm{b}},2}, \cdots,\mathbf{w}_{{\rm{b}},K}\right] \in \mathbb{C}^{N_{\rm{t}} \times K}$ and $\mathbf{W}_{\rm{r}} = \left[ \mathbf{w}_{{\rm{r}},1}, \mathbf{w}_{{\rm{r}},2},\cdots, \mathbf{w}_{{\rm{r}},K}\right] \in \mathbb{C}^{a \times K}$ denote the BS and RDARS beamforming matrices, respectively. 
$\mathbf{F} = \left[ \mathbf{f}_{1}, \mathbf{f}_{2},\cdots, \mathbf{f}_{K}\right] \in \mathbb{C}^{(N_{\rm{t}}+a) \times K}$ represents the equivalent transmit beamforming matrix with ${\bf{f}}_k \in \mathcal{C}^{(N_{\rm{t}}+a) \times 1}$.
\begin{figure}[t] 
 \centering
 \includegraphics[width=0.4\textwidth]{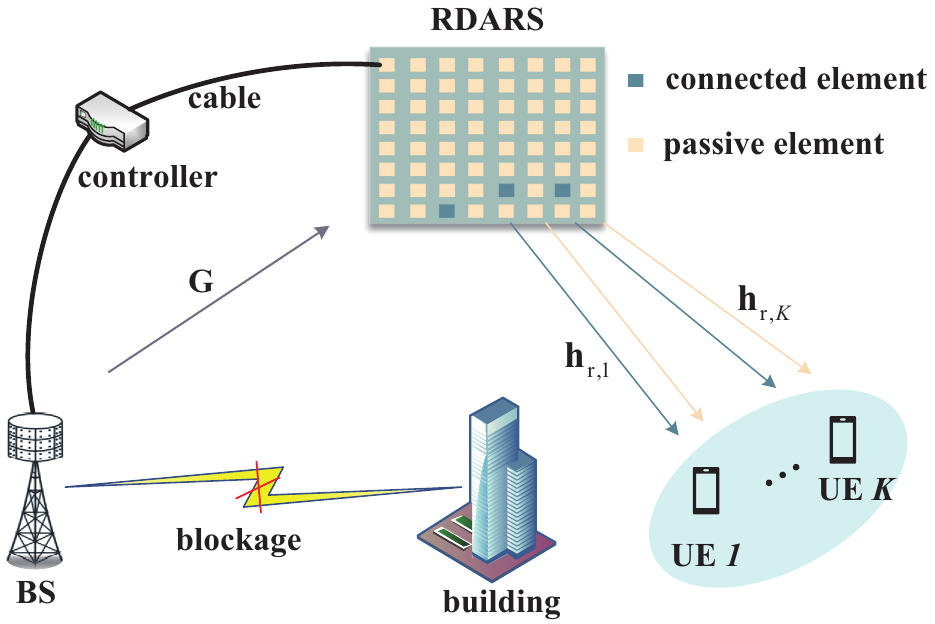}
 \caption{A RDARS-assisted downlink MIMO architecture.}
 % \vspace{-13pt}
 \label{fig: system architecture}
 \end{figure}
The direct links between the BS and user equipments (UEs) are assumed to be severely blocked by obstacles, due to the high directivity of mmWave signals. 
The channels from the BS to RDARS and that from RDARS to the $k$-th UE are denoted by ${{\bf{G}}} \in \mathbb{C}^{N \times N_{\rm{t}}}$ and ${\bf{h}}_{{\rm{r}},k} \in \mathbb{C}^{N \times 1}$, respectively\footnote{The channel estimation (CSI) can be obtained based on the flexible mode switching of RDARS elements. Specifically, the individual CSI of the BS-RDARS channel and RDARS-UE channel, can be obtained by switching the working modes of connected elements as the transmit or receive antennas \cite{ji2025reconfigurable, Wang_RDARS, ChengzhiMa_RDARS}. To reduce the overhead, a practical channel estimation scheme based on the linear minimum mean square error (LMMSE) criteria has been proposed in \cite{ChengzhiMa_RDARS}, where only a limited number of pilots is required for this channel estimator, thus appealing for practical applications.}. 
By considering the general Rician channel model, the channels ${{\bf{G}}}$ and ${\bf{h}}_{{\rm{r}},k}$ are given by 
\begin{align}
{\bf{G}} = &\kappa_{{\rm{b}}}\left(\sqrt{\frac{\xi}{\xi+1}} \tilde{{\bf{b}}}(N, \chi,\psi){\bf{b}}^{H}({{N_{\rm{t}}}},\theta) + \sqrt{\frac{1}{\xi+1}} \widetilde{{\bf{G}}}\right), \label{equ: Channel BR}\\
 {\bf{h}}_{{\rm{r}},k} =& \kappa_{{\rm{r}},k} \left(\sqrt{\frac{\xi}{\xi+1}}\tilde{\bf{b}}(N, \phi_{k},\upsilon_{k}) + \sqrt{\frac{1}{\xi+1}}\widetilde{{\bf{h}}}_{r,k}
 \right),\label{equ: Channel r}
\end{align}
where $\xi$ denotes the Rician factor, and $\kappa_{{\rm{b}}}$ and $\kappa_{\rm{r},k}$ denote the complex-valued channel coefficients of BS-RDARS and RDARS-UE $k$ channels, respectively.
Let $\chi$ and $\psi$ represent the vertical and horizontal angle-of-arrivals (AoAs) from the BS to RDARS, respectively, $\theta$ represent the angle-of-departure (AoD) from the BS to RDARS. The vertical and horizontal AoAs from the RDARS to the $k$-th UE are denoted by $\phi_{k}$ and $\upsilon_{k}$, respectively. The elements within the non-line-of-sight (NLoS) components and $\widetilde{\bf{h}}_{r,k}$ are characterized by a standard complex Gaussian distribution.
Moreover, the steering vector is
\begin{equation}
    {\bf{b}}(N, \theta ) = \frac{1}{\sqrt{N}}[e^{j\pi\theta \cdot 0}, e^{j\pi\theta \cdot 1}, \cdots , e^{j{\pi}\theta \cdot (N-1)}]^{T},
\end{equation}
where the inter-antenna spacing is half wavelength and $\theta$ denotes the spatial frequency.
Then, we have $\tilde{\bf{b}}(N, \chi,\psi ) = {\bf{b}}(N_{\rm{z}}, \chi ) \otimes {{\bf{b}}}(N_{\rm{y}}, \psi )$, where $N_{\rm{z}}$ and $N_{\rm{y}}$ denote the numbers of elements along with the vertical and horizontal directions, respectively. 

The received signal for the $k$-th UE is given by
\begin{equation}\label{equ: y_k}
{y_k} = {\bf{h}}_{{\rm{r}},k}^H {\bf{(}}{{\bf{I}}_N}{\bf{ - A)\Phi }} {{\bf{G}}}{\bf{W}}_{{\rm{b}}} {\bf{s}} + {\bf{h}}_{{\rm{r}},k}^H{\bf{\tilde A}} {\bf{W}}_{{\rm{r}}} {\bf{s}} + {n_k},
\end{equation}
where ${\bf{\Phi}}\in \mathbb{C}^{N \times N}$ denotes the RDARS's phase shift matrix, mode switching matrix ${\bf{\tilde A}} \in \mathbb{C}^{N \times a}$ consists of non-zero columns of the sparse mode switching matrix $\bf{A}$, and $n_k$ denotes the additive white Gaussian noise (AWGN) with power $\sigma^2_k$. 
%We denote ${\bf{s}}$ as the transmit symbol vector with $\mathbb{E}\{s_i s_i\} = 1$ and $\mathbb{E}\{s_i s_j\} = 0$, where $\mathbb{E}\{ \cdot \}$ denote the expectation function.
Let $\bm{\varphi} = {[{\varphi _1},...,{\varphi _N}]^H} = \operatorname{diag}(\bf{\Phi })$ and ${{\bf{H}}_{\mathrm{r},k}} =\operatorname{diag}({\bf{h}}_{{\rm{r}},k}^H){\bf{G}}$, the received signal can be equivalently expressed as 
\begin{equation}
    y_k = {{\bf{h}}_k}{{\bf{f}}_{k}}{s_k} + \sum\nolimits_{m \ne k}^K {{{\bf{h}}_k}{{\bf{f}}_{m}}{s_m}}  + {n_k},
\end{equation}
where ${{\bf{h}}_k}$ denotes the effective channel between the BS and the $k$-th UE, given by ${{\bf{h}}_k} = [{{\bm{\varphi }}^H} {\bf{(}}{{\bf{I}}_N}{\bf{ - A) }} {{\bf{H}}_{{\rm{r}},k}},{\bf{h}}_{{\rm{r}},k}^H{\bf{\tilde A}}]$.
%
%Then, we denote $\bm{\varphi} = {[{\varphi _1},...,{\varphi _N}]^H} = \operatorname{vec}(\bf{\Phi })$, ${{\bf{H}}_{{\rm{b}},k}} = {\bf{(}}{{\bf{I}}_N}{\bf{ - A)}}{{\bf{H}}_{\mathrm{r},k}}$ with ${{\bf{H}}_{\mathrm{r},k}} =\operatorname{diag}({\bf{h}}_{{\rm{r}},k}^H){\bf{G}}$, and ${{\bf{h}}_k} = [{{\bf{\varphi }}^H}{{\bf{H}}_{{\rm{b}},k}},{\bf{h}}_{{\rm{r}},k}^H{\bf{\tilde A}}]$. Then, ${y_k}$ is rewritten as
%
The signal-to-noise-plus-interference ratio (SINR) for the $k$-th UE is 
\begin{equation}\label{equ: SINR}
   {\gamma_k} = \frac{{{{\left| {{{\bf{h}}_k}{{\bf{f}}_{k}}} \right|}^2}}}{{{{\sum\nolimits_{m \ne k}^K |{{{\bf{h}}_k}{{\bf{f}}_{m}}} }|^2} + {\sigma _k}^2}}.
\end{equation}
The achievable rate of UE k is 
${R_k} = {\log _2}(1 + {\gamma _k})$.
\subsection{Problem Formulation}\label{sec: problem formulation}
Our objective is to maximize the WSR, by jointly optimizing the BS beamforming, RDARS beamforming, and mode switching matrix.
The optimization problem can be formulated as
\begin{subequations}\label{pro: weighted sum rate}
 \begin{align}
 \mathop {\max }\limits_{\substack{{{\bf{F}}},{\bf{\Phi }},{\bf{A}}, {\tilde{\bf{A}}}}} 
 & \;\;\sum\nolimits_{k = 1}^K {{\alpha _k}{R_k}}
 \\
 \;\textrm{s.t.}\;
 & ~ \operatorname{Tr}({{\bf{F}}}{\bf{F}}^H) \le {P_{\rm{tot}}} ,\label{con: P}\\
 & ~|\mathbf{\Phi}_{[i,i]}| = 1,  \forall i\in{\mathcal{N}},\label{con: Phi} \\
 & ~\sum\nolimits_{i = 1}^{N}\!\mathbf{A}_{[i,i]} \!\!= a, \mathbf{A}_{[i,i]} \!\in\! \{0,1\}, \forall i\in{\mathcal{N}}, \label{con: A}\\
 & ~\sum\nolimits_{i = 1}^{N}\!\tilde{\mathbf{A}}_{[i,l]} \!\!= 1, \tilde{\mathbf{A}}_{[i,l]} \!\!\in\!\! \{0,1\} , \forall l\in{\mathcal{A}}, i\in{\mathcal{N}},\label{con: A tilde}\\
 & ~ {\bf{A} }= \tilde{{\bf{A}}}\tilde{{\bf{A}}}^{H},\label{con: A +A tilde}
 \end{align}
\end{subequations}
where $\alpha_k$ denotes the weighted factor of the $k$-th UE, and $P_{\rm{tot}}$ denotes the maximum total transit power. 
It is observed from \eqref{pro: weighted sum rate} that it is a mixed-integer nonlinear programming (MINLP) problem, which is a non-convex problem due to highly coupled variables and unit-modulus constraints. Moreover, the binary mode switching constraints exacerbate the challenge. To solve this problem, we propose an efficient penalty term-based WMMSE (PWM) algorithm in the following section.

\section{PWM Algorithm} \label{sec: PWM algorithm}
In this section, we first reformulate the original WSR problem via the WMMSE method \cite{Qi_WMMSE_Algo}. Then, an AO-based algorithm is proposed to solve the reformulated problem based on MM and penalty methods.

By temporarily removing constraint \eqref{con: P} and introducing the auxiliary vectors $\bm{\lambda}=[\lambda_1, \cdots, \lambda_K]^{T}$ and ${\bf{u}} = [u_1, \cdots, u_K]^{T}$,
problem \eqref{pro: weighted sum rate} can  be transformed into a WMMSE problem as 
%\vspace{-8pt}
\begin{align}\label{pro: weighted sum rate MMSE}
 \mathop {\min }\limits_{\substack{{{\bf{F}}},{\bf{\Phi }},{\bf{A}},\\ {\tilde{\bf{A}}}, \bf{u}, \bm{\lambda}}}\;
 \sum\limits_{k = 1}^K {{\alpha _k}({\lambda _k}{e_k} - \log {\lambda _k})}
\;\;\textrm{s.t.}\;
\eqref{con: Phi}, \eqref{con: A}, \eqref{con: A tilde}, \eqref{con: A +A tilde},
\end{align}
where $e_k$ denotes the MSE for detecting UE $k$'s signal, which is given by 
\begin{align}\label{equ: WMMSE}
{e_k} &= 1 - u_k^H{{\bf{h}}_k}{{\bf{f}}_{k}} -{\bf{f}}_{k}^H{\bf{h}}_k^H{u_k} \notag \\
&+ u_k^H{{\bf{h}}_k}\sum\limits_{m = 1}^K {{{\bf{f}}_{m}}{\bf{f}}_{m}^H} {\bf{h}}_k^H{u_k} + u_k^H{u_k}\frac{{\sigma_{k}^{2}}}{P_{\rm{tot}}}\sum\limits_{m = 1}^K {{\bf{f}}_{m}^H{{\bf{f}}_{m}}}.
\end{align}
The following Lemma is given to verify the equivalence between problem \eqref{pro: weighted sum rate} and \eqref{pro: weighted sum rate MMSE}.
\begin{lemma}\label{Lemma: WMMSE equivalent pro}
  Problem \eqref{pro: weighted sum rate MMSE} is equivalent to the weight sum rate maximization problem \eqref{pro: weighted sum rate}, where the global optimal solutions for the two problems are identical.
\end{lemma}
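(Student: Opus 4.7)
The plan is to follow the standard WMMSE-to-WSR equivalence argument of Shi \emph{et al.}\ \cite{Qi_WMMSE_Algo}, adapted to accommodate the non-standard noise term $u_k^H u_k \tfrac{\sigma_k^2}{P_{\rm tot}}\sum_m \mathbf{f}_m^H \mathbf{f}_m$ that replaces the usual constant $\sigma_k^2$ in $e_k$. The argument proceeds in three stages: an inner minimization over $u_k$, an inner minimization over $\lambda_k$, and a scaling argument that reconciles the dropping of the power constraint \eqref{con: P}.

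First, I would fix the primal variables $(\mathbf{F},\mathbf{\Phi},\mathbf{A},\tilde{\mathbf{A}})$ and minimize over $u_k$. The expression \eqref{equ: WMMSE} is a strictly convex quadratic in $u_k$ whose Hessian is $\mathbf{h}_k\!\left(\sum_m \mathbf{f}_{m}\mathbf{f}_{m}^H\right)\!\mathbf{h}_k^H + \tfrac{\sigma_k^2}{P_{\rm tot}}\sum_m \mathbf{f}_{m}^H\mathbf{f}_{m}$. Setting the Wirtinger derivative to zero yields a closed-form modified MMSE receiver $u_k^\star$, and a direct substitution gives the minimum MSE $e_k^\star = 1/(1+\tilde\gamma_k)$, where $\tilde\gamma_k$ is the SINR \eqref{equ: SINR} with $\sigma_k^2$ replaced by $\tfrac{\sigma_k^2}{P_{\rm tot}}\sum_m \|\mathbf{f}_m\|^2$.

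Second, for the auxiliary weight I would exploit that $\lambda_k e_k - \log\lambda_k$ is strictly convex in $\lambda_k>0$, whence the first-order condition gives $\lambda_k^\star = 1/e_k^\star$. Substituting $(u_k^\star,\lambda_k^\star)$ back into \eqref{pro: weighted sum rate MMSE} produces the inner value $\sum_k \alpha_k\bigl(1 + \log e_k^\star\bigr) = \sum_k \alpha_k\bigl(1 - \log(1+\tilde\gamma_k)\bigr)$, so the joint minimization in \eqref{pro: weighted sum rate MMSE} reduces to maximizing $\sum_k \alpha_k \log(1+\tilde\gamma_k)$ over $(\mathbf{F},\mathbf{\Phi},\mathbf{A},\tilde{\mathbf{A}})$ subject to \eqref{con: Phi}--\eqref{con: A +A tilde}.

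The main obstacle, and the third stage of the proof, is justifying the omission of \eqref{con: P} in \eqref{pro: weighted sum rate MMSE}. The key observation is that the rescaled noise $\tfrac{\sigma_k^2}{P_{\rm tot}}\sum_m \|\mathbf{f}_m\|^2$ renders $\tilde\gamma_k$ invariant under the scaling $\mathbf{F}\mapsto c\mathbf{F}$ for any $c>0$, since both the signal and interference-plus-noise terms scale by $|c|^2$. Consequently, any optimizer of \eqref{pro: weighted sum rate MMSE} can be rescaled so that $\operatorname{Tr}(\mathbf{F}\mathbf{F}^H)=P_{\rm tot}$ without altering the objective, at which point $\tilde\gamma_k$ coincides with the true SINR $\gamma_k$ in \eqref{equ: SINR}. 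Conversely, any feasible $\mathbf{F}$ for \eqref{pro: weighted sum rate} satisfies $\operatorname{Tr}(\mathbf{F}\mathbf{F}^H)\le P_{\rm tot}$, and rescaling to equality can only increase $\sum_k\alpha_k R_k$ while remaining feasible, so the two problems share the same optimal value and an identical set of global optimizers in $(\mathbf{F},\mathbf{\Phi},\mathbf{A},\tilde{\mathbf{A}})$. The delicate point is thus the scaling-invariance argument, which is precisely what the modification of the MSE in \eqref{equ: WMMSE} is engineered to enable.
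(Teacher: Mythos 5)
Your proposal is correct, and its core follows the same route as the paper's Appendix~A: fix the primal variables, obtain the closed-form optimal receiver $u_k$ and weight $\lambda_k=e_k^{-1}$ from the first-order conditions, substitute back, and recover the weighted sum rate from the resulting minimum MSE. The two treatments differ in two secondary respects. First, the paper recovers the rate by writing $\log_2(e_k^{\min})^{-1}$ and invoking the Woodbury identity in \eqref{equ: e_min in Proof of Proposition 5}, whereas you use the equivalent direct relation $e_k^{\min}=1/(1+\tilde\gamma_k)$; these are interchangeable. Second, and more substantively, your third stage makes explicit the point the paper only asserts in one sentence (``an auxiliary term \dots\ is introduced \dots\ to remove the maximum transmit power constraint''): you show that the rescaled noise term $\tfrac{\sigma_k^2}{P_{\rm tot}}\sum_m\mathbf{f}_m^H\mathbf{f}_m$ in \eqref{equ: WMMSE} makes $\tilde\gamma_k$ invariant under $\mathbf{F}\mapsto c\mathbf{F}$, so an optimizer of \eqref{pro: weighted sum rate MMSE} can be normalized to $\operatorname{Tr}(\mathbf{F}\mathbf{F}^H)=P_{\rm tot}$ (where $\tilde\gamma_k=\gamma_k$), while any feasible point of \eqref{pro: weighted sum rate} can be scaled up to equality without decreasing the WSR; this is precisely the argument needed to legitimize dropping \eqref{con: P}, and the paper's appendix (which even writes $J_k$ with the unscaled $\sigma_k^2$, implicitly assuming full-power normalization) leaves it implicit. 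The only minor caveat is the phrase ``identical set of global optimizers'': since \eqref{pro: weighted sum rate MMSE} is scale-invariant in $\mathbf{F}$, its optimizers form scaling families, and the identification with the optimizers of \eqref{pro: weighted sum rate} holds for the power-normalized representatives — the same looseness already present in the lemma statement itself, not a gap in your argument.
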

\begin{proof}
  Please refer to Appendix A.
\end{proof}
Note that problem \eqref{pro: weighted sum rate MMSE} is still a non-convex problem due to the coupled variables. 
To this end, the AO-based method is applied, where all the variables are divided into four blocks, i.e., 1) active beamforming $\{ {\bf{f}}_{k}, u_k, \lambda_k\}$, 2) passive beamforming $\bf{\Phi}$, 3) sparse mode switching matrix ${\bf{A}}$ and 4) mode switching matrix $\tilde{\bf{A}}$. Then each block of variables is optimized in an iterative manner until convergence is achieved.

\subsection{Active Beamforming Optimization}
With fixed $\bf{\Phi}$, ${\bf{A}}$ and $\tilde{{\bf{A}}}$, problem \eqref{pro: weighted sum rate MMSE} is reduced to 
\begin{align}\label{pro: weighted sum rate MMSE fixed Phi, A}
 \mathop {\min }\limits_{{{\bf{F}}},\bf{u}, \bm{\lambda}} \quad
 & \sum\nolimits_{k = 1}^K {{\alpha _k}({\lambda _k}{e_k} - \log {\lambda _k})}.
\end{align}
The solutions to problem \eqref{pro: weighted sum rate MMSE fixed Phi, A} can be derived in closed-form as
\begin{align}\label{equ: opt_lammda}
\lambda _k^{\mathrm{opt}} = &{e_{k}^{-1}},\\
u_k^{\mathrm{opt}} = &{({{\bf{h}}_k}\sum\limits_{m = 1}^K {{{\bf{f}}_{m}}{\bf{f}}_{m}^H} {\bf{h}}_k^H + \frac{{\sigma_{k}^{2}}}{P_{\rm{tot}}}\sum\limits_{m = 1}^K {{\bf{f}}_{m}^H} {{\bf{f}}_{m}})^{ - 1}}{{\bf{h}}_k}{{\bf{f}}_{k}}, \label{equ: opt_u}\\
{\bf{f}}_{k}^{\mathrm{opt}}=& {\alpha _k}{u_k}{\lambda _k}(\!\sum\limits_{m = 1}^K {{\alpha _m}u_m^H{u_m}{\lambda _m}(\frac{{\sigma_{m}^{2}}}{P_{\rm{tot}}}{{\bf{I}}_{N + a}}\!\! + \!\!{\bf{h}}_k^H{{\bf{h}}_k}){)^{ - 1}}{\bf{h}}_k^H}. \label{equ: opt_f_R}
\end{align}
By substituting \eqref{equ: opt_u} into \eqref{equ: WMMSE}, we obtain the minimum MSE (MMSE) as
$e_k^{\mathrm{mmse}} = 1 - {\bf{f}}_{k}^H{\bf{h}}_k^H{J_{k}^{-1}}{{\bf{h}}_k}{{\bf{f}}_{k}}$ 
, with ${J_k}={{\bf{h}}_k}\sum\nolimits_{m = 1}^K {{{\bf{f}}_{m}}{\bf{f}}_{m}^H} {\bf{h}}_k^H + \frac{\sigma_{k}^{2}}{P_{\rm{tot}}}\sum\nolimits_{m = 1}^K {{\bf{f}}_{m}^H} {{\bf{f}}_{m}}$.
\subsection{Passive Beamforming Optimization}
Given fixed $u_k$, $\lambda_k$, ${\bf{f}}_{k}$, ${\bf{A}}$ and $\tilde{{\bf{A}}}$, problem \eqref{pro: weighted sum rate MMSE} is reduced to
\begin{align}
 \mathop {\min }\limits_{\bm{\varphi}} \quad {{\bm{\varphi }}^H}\bf{C}{\bm{\varphi }} + {{\bm{\beta }}^H}{\bm{\varphi }} + {{\bm{\varphi }}^H}{\bm{\beta }}
 \;\;\;\;\;\;\;\textrm{s.t.}\;\; \eqref{con: Phi},\label{pro: passive BF}
 \end{align}
where ${\bf{C}} = \sum\nolimits_{k = 1}^{K} {\alpha _k}{\lambda _k}u_k^H{u_k}({\bf{I}}_{N}- {\bf{A}}){{\bf{H}}_{{\rm{r}},k}}\sum\nolimits_{m = 1}^K {{\bf{w}}_{\mathrm{b},m}}{\bf{w}}_{\mathrm{b},m}^H {{\bf{H}}^{H}_{{\rm{r}},k}} ({\bf{I}}_{N}- {\bf{A}})$ and ${\bm{\beta }} = \sum\nolimits_{k = 1}^K { {\alpha _k}{\lambda _k}u_k^H{u_k}({\bf{I}}_{N}- {\bf{A}}){{\bf{H}}_{{\rm{r}},k}}\sum\nolimits_{m = 1}^K {{{\bf{w}}_{\mathrm{b},m}}{{\bf{w}}^H_{\mathrm{r},m}}{{{\bf{\tilde A}}}^H}} } {\bf{h}}_{{\rm{r}},k}- {\alpha _k}{\lambda _k}u_k^H({\bf{I}}_{N}- {\bf{A}}){{\bf{H}}_{{\rm{r}},k}}{{\bf{w}}_{\mathrm{b},k}}$.
Let $\mathbf{p}= [\bm{\varphi} , q]^T$ where $p_n$ is the $n$-th element of ${\bf{p}}$, and $q$ is an auxiliary variable. Therefore, problem \eqref{pro: passive BF} is equivalent to
\begin{align}\label{pro: rankone}
 \mathop {\max}\limits_{\bf{p}} 
 \;\;{{\bf{p}}^H}{\bf{Dp}}
 \;\;\textrm{s.t.}|p_n| = 1, n= 1,\cdots, N+1,
\end{align}
where ${\bf{D}} = \left[ { - {\bf{C}}}, { - {\bm{\beta}}};{- {{\bm{\beta }}^H}}, 0 \right]$.

It is observed that the optimization problem \eqref{pro: rankone} is a unimodular quadratic program and can be solved by the power iteration algorithm. Specifically, the value of $p$ in the $q$-th iteration is \cite{PI_2014}
\begin{equation}\label{equ: power iteration}
  {\bf{p}}^{(q+1)}= e^{j\mathrm{arg}(({\bf{D}}+\varepsilon {\bf{I}}_{N+1}){\bf{p}}^{(q)})},
\end{equation}
where $\varepsilon {\bf{I}}_{N+1}$ is introduced to ensure that ${\bf{D}}+\varepsilon {\bf{I}}_{N+1}$ is positive definite. After the iteration converges, the passive beamforming vector can be derived as $\bm{\varphi} = e^{j \arg(\frac{\mathbf{p}{[1:N]}}{p_{N+1}} )}$, thus solving problem \eqref{pro: passive BF}.
This guarantees the convergence of the iterative algorithm, as shown in Lemma \ref{lemma: Power iteration equivalent}.
\begin{lemma}\label{lemma: Power iteration equivalent}
The problem is equivalent to problem \eqref{pro: rankone} as follows:
\begin{subequations}\label{pro: rankone of D_pie}
\begin{align}
  \mathop {\max}\limits_{\bf{p}} 
\quad&{{\bf{p}}^H}{\bf{D}}^{'} {\bf{p}}\\
\textrm{s.t.}\quad &|p_n| = 1, n= 1,\cdots, N+1.
 \end{align}
\end{subequations}
 Furthermore, the power iteration algorithm is guaranteed to converge to at least a local optimum of problem \eqref{pro: rankone} when ${\bf{D}}^{'}$ is positive-defined.
\end{lemma}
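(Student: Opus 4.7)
The plan is to handle the two claims of the lemma separately. First I would establish the equivalence between problems \eqref{pro: rankone} and \eqref{pro: rankone of D_pie}. Since every feasible $\mathbf{p}$ satisfies $|p_n|=1$ for $n=1,\ldots,N+1$, one has ${\bf p}^H{\bf p}=N+1$, so the objective in \eqref{pro: rankone of D_pie} becomes
\begin{equation}
{\bf p}^H{\bf D}'{\bf p}={\bf p}^H{\bf D}{\bf p}+\varepsilon(N+1),
\end{equation}
which differs from the original objective by the constant $\varepsilon(N+1)$. Hence both problems share the same feasible region and attain their optima at the same points, proving equivalence. Choosing $\varepsilon$ large enough ensures ${\bf D}'={\bf D}+\varepsilon{\bf I}_{N+1}$ is positive definite without altering the optimizers.

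Next I would analyze the power iteration update \eqref{equ: power iteration}. The key observation is that the update rule $\mathbf{p}^{(q+1)}=e^{j\arg({\bf D}'{\bf p}^{(q)})}$ is precisely the maximizer of the surrogate objective $\mathrm{Re}({\bf p}^H{\bf D}'{\bf p}^{(q)})$ over the unit-modulus set, since each entry is aligned independently with the phase of the corresponding entry of ${\bf D}'{\bf p}^{(q)}$. Therefore
\begin{equation}
\mathrm{Re}\!\left({\bf p}^{(q+1)H}{\bf D}'{\bf p}^{(q)}\right)\;\ge\;{\bf p}^{(q)H}{\bf D}'{\bf p}^{(q)}.
\end{equation}

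The main step, and I expect the main obstacle, is turning this surrogate improvement into genuine improvement of the true objective ${\bf p}^H{\bf D}'{\bf p}$. Here the positive definiteness of ${\bf D}'$ enters critically: expanding $({\bf p}^{(q+1)}-{\bf p}^{(q)})^H{\bf D}'({\bf p}^{(q+1)}-{\bf p}^{(q)})\ge 0$ and rearranging yields
\begin{equation}
{\bf p}^{(q+1)H}{\bf D}'{\bf p}^{(q+1)}\;\ge\;2\,\mathrm{Re}\!\left({\bf p}^{(q+1)H}{\bf D}'{\bf p}^{(q)}\right)-{\bf p}^{(q)H}{\bf D}'{\bf p}^{(q)}\;\ge\;{\bf p}^{(q)H}{\bf D}'{\bf p}^{(q)},
\end{equation}
where the last inequality uses the surrogate bound above. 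Thus the sequence $\{{\bf p}^{(q)H}{\bf D}'{\bf p}^{(q)}\}$ is monotonically non-decreasing.

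Finally, I would conclude by compactness: the feasible set $\{{\bf p}:|p_n|=1\}$ is compact, so ${\bf p}^{(q)H}{\bf D}'{\bf p}^{(q)}$ is bounded above. A monotone bounded sequence converges, and any accumulation point ${\bf p}^{\star}$ of $\{{\bf p}^{(q)}\}$ satisfies the fixed-point condition ${\bf p}^{\star}=e^{j\arg({\bf D}'{\bf p}^{\star})}$, which is exactly the first-order stationarity (KKT) condition for the unimodular quadratic program \eqref{pro: rankone of D_pie}. Combining with the equivalence established in the first step, ${\bf p}^{\star}$ is a local optimum of \eqref{pro: rankone}, completing the proof.
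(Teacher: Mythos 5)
Your equivalence argument and your monotonicity argument match the paper's: the objective of \eqref{pro: rankone of D_pie} differs from that of \eqref{pro: rankone} by the constant $\varepsilon(N+1)$ on the unit-modulus set, the update \eqref{equ: power iteration} maximizes $\Re\{{\bf p}^H{\bf D}'{\bf p}^{(q)}\}$, and positive definiteness of ${\bf D}'$ turns the surrogate improvement into $({\bf p}^{(q+1)})^H{\bf D}'{\bf p}^{(q+1)} \geq ({\bf p}^{(q)})^H{\bf D}'{\bf p}^{(q)}$; boundedness then gives convergence of the objective values. Up to this point you are essentially reproducing the paper's proof.

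The gap is in your last step, where you assert that the fixed-point relation ${\bf p}^{\star}=e^{j\arg({\bf D}'{\bf p}^{\star})}$ ``is exactly the first-order stationarity (KKT) condition'' and therefore ${\bf p}^{\star}$ is a local optimum. For a nonconvex maximization over the torus $\{|p_n|=1\}$, first-order stationarity does not imply local optimality: a stationary point can be a saddle with respect to the phase variables, so some additional (second-order or dual-type) certificate is required. The paper supplies exactly this missing piece: at the fixed point $\bar{\bf p}$ it verifies
\begin{equation}
{\bf p}^{H}\bigl(\operatorname{diag}(\operatorname{abs}({\bf D}'\bar{\bf p}))-{\bf D}'\bigr){\bf p}\;=\;\sum_{m}\Bigl|\sum_{n}D'_{m,n}p_{n}\Bigr|-\sum_{m,n}p_{m}^{*}D'_{m,n}p_{n}\;\geq\;0
\end{equation}
for unit-modulus ${\bf p}$, and then invokes the local-optimality result of Appendix B of \cite{PI_2014}, for which this condition together with the fixed-point property is the hypothesis. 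Without this step (or an equivalent second-order argument), your proof only establishes convergence to a stationary point, not the ``local optimum'' claim of the lemma. A smaller, fixable looseness: monotone bounded objective values alone do not yield that accumulation points of $\{{\bf p}^{(q)}\}$ are fixed points; you should note, e.g., that $({\bf p}^{(q+1)}-{\bf p}^{(q)})^{H}{\bf D}'({\bf p}^{(q+1)}-{\bf p}^{(q)})\geq\lambda_{\min}({\bf D}')\|{\bf p}^{(q+1)}-{\bf p}^{(q)}\|^{2}$ is squeezed to zero by the converging objective, so successive iterates coalesce and the phase map passes to the limit.
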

\begin{proof}
  Please refer to Appendix B.
\end{proof}

% Considering the SDR technique, problem (\ref{pro: rankone}) is reformulated as
%  \begin{align}
%  \label{pro: SDR}
%  \mathop {\max }\limits_{\bf{P}} \;\;
% \operatorname{Tr}(\mathbf{P}{\bf{D}}) \;\;\;\;\; \textrm{s.t.} \;\; \mathbf{P}_{[n,n]} = 1, \mathbf{P} \succeq 0.
% \end{align}
% As problem (\ref{pro: SDR}) is a standard convex semi-definite program (SDP), it can be optimally solved by existing convex optimization solvers such as CVX. Then, we can construct a rank-one solution from the obtained higher-rank solution to problem (\ref{pro: SDR}) by Gaussian random vectors, i.e., $\bm{\varphi} = e^{j \arg(\frac{\mathbf{p}{[1:N]}}{p_{N+1}} )}$, where $\mathbf{p}{[1:N]}$ denotes the first $N$ terms of $\bf{p}$.
\subsection{Optimization of Sparse Mode Switching Matrix ${\bf{A}}$}
To create a unified framework, we employ the penalty technique to consolidate constraint \eqref{con: A +A tilde} into penalty terms. 
With fixed $\bf{F}$,$\bf{u}$, $\bm{\lambda}$, $\bf{\Phi}$, and $\tilde{\bf{A}}$, problem \eqref{pro: weighted sum rate} can be reformulated as 
\begin{align}\label{pro: MMSE A}
 \mathop {\min }\limits_{{\bf{A}}}
\sum\limits_{k = 1}^K {{\alpha _k}({\lambda _k}{e_k} \!\!- \!\!\log {\lambda _k})} \!\!+\!\! \frac{1}{2\rho} || \mathbf{A}\!\! - \!\!\tilde{\bf{A}}\tilde{\bf{A}}^{H}||_{F}^{2}\;\;\;\textrm{s.t.}\;
 \eqref{con: A},
 \end{align}
where $1/(2\rho)$ is the penalty factor.

Let $\bf{a} = \operatorname{diag} ( \bf{A} )$ and $\tilde{\mathbf{a}} = [\mathbf{\tilde A}[:,1]^{T}, \cdots, \mathbf{\tilde A}[:,a]^{T}]^{T} =  [\bar{\mathbf{a}}_{1}^{T}, \cdots, \bar{\mathbf{a}}_{a}^{T}]^{T} \in \mathbb{C}^{Na\times 1}$, where $\mathbf{\tilde A}[:,l]$ denotes the $l$-th non-zero column of $\bf{\tilde A}$.
As such, the objective function of problem \eqref{pro: MMSE A} can be rewritten as
\begin{align} \label{pro: A}
    {f_3}({\bf{A}})  &= {\bf{r}}_1^H{\bf{a}} 
    \!+\! {{\bf{a}}^T}{{\bf{r}}_1}\!+\!{{\bf{a}}^T}{{\bf{r}}_2} + {\bf{r}}_2^H{\bf{a}} + {{\bf{a}}^T}{{\bf{R}}_1}{\bf{a}} \nonumber\\
    & + {\bf{r}}_3^H{\bf{a}} \!+\! {{\bf{a}}^T}{{\bf{r}}_3}\!
   + \!\frac{1}{{2\rho }}({\bf{r}}_4^T{\bf{a}}\! +\!{r_5}),
\end{align}
where the auxiliary parameters are given by 
\begin{subequations}\label{equ: auxiliary parameters of A}
\begin{align}
{{\bf{r}}_1} \!\!= & \!\! \sum\nolimits_{k = 1}^K {{\alpha _k}{\lambda _k}{u_k}\operatorname{diag}({{\bf{w}}^H_{{\rm{b}},k}}{{\bf{H}}_{{\rm{r}},k}^H}){\bm{\varphi}}}, \\
{{\bf{r}}_2} \!\!=& \!\!-\! \!\sum\limits_{k = 1}^K {{\alpha _k}{\lambda _k}u_k^H{u_k}{{\bf{\Phi }}^H}{{\bf{H}}_{{\rm{r}},k}}\!\! \sum\limits_{m = 1}^K {{{\bf{w}}_{{\rm{b}},m}}{\bf{w}}_{{\rm{b}},m}^H} {\bf{H}}_{{\rm{r}},k}^H{\bm{\varphi }}},\\
{{\bf{r}}_3}\!\! = & \!\! - \!\! \sum\limits_{k = 1}^K {{\alpha _k}{\lambda _k}u_k^H{u_k}{{\bf{\Phi }}^H}{{\bf{H}}_{{\rm{r}},k}} \!\! \sum\limits_{m = 1}^K {{{\bf{w}}_{{\rm{b}},m}}{\bf{w}}_{{\rm{r}},m}^H} {{{\bf{\tilde A}}}^H}{{\bf{h}}_{{\rm{r}},k}}} , \\
{{\bf{r}}_4} \!\!= & \operatorname{diag}({{\bf{I}}_N} - 2{\bf{\tilde A}}{{{\bf{\tilde A}}}^H}), \\
{r_5} \!\!= & \operatorname{Tr}({\bf{\tilde A}}{{{\bf{\tilde A}}}^H}), \\
{{\bf{R}}_1} \!\!=& \!\! \sum\limits_{k = 1}^K {{\alpha _k}{\lambda _k}u_k^H{u_k}{{\bf{\Phi }}^H}{{\bf{H}}_{{\rm{r}},k}}\!\! \sum\limits_{m = 1}^K {{{\bf{w}}_{{\rm{b}},m}}{\bf{w}}_{{\rm{b}},m}^H} {\bf{H}}_{{\rm{r}},k}^H{\bf{\Phi }}}.
\end{align}
\end{subequations}
The derivations of auxiliary parameters are shown in Appendix C. 

In the following, we first apply the MM technique to find a tractable surrogate function of \eqref{pro: MMSE A}. Based on the second-order Taylor expansion, an upper bound of the term ${{\bf{a}}^T}{{\bf{R}}_1}{\bf{a}}$ can be derived as
$ {{\bf{a}}^T}{{\bf{R}}_1}{\bf{a}} \le {{\bf{a}}^T}{{\bf{\Lambda }}_1}{\bf{a}} + 2\Re \left\{ {{{\bf{a}}^T}({{\bf{R}}_1} - {{\bf{\Lambda }}_1}){{\bf{a}}_t}} \right\} + {{\bf{a}}_t}^T({{\bf{\Lambda }}_1} - {{\bf{R}}_1}){{\bf{a}}_t}$,
where ${{\bf{\Lambda }}_1} = {{\bar \lambda }_{\max }}({{\bf{R}}_1}){{\bf{I}}_N}$, with ${{\bar \lambda }_{\max }}({{\bf{R}}_1})$ denoting the maximum eigenvalue of ${{\bf{R}}_1}$. Then, the surrogate function with respect to (w.r.t.) $\bf{A}$ is given by
${{\bar f}_3}({\bf{A}})= 2\Re \left\{ {{\bf{r}}_1^H{\bf{a}}} \right\} + 2\Re \left\{ {{\bf{r}}_2^H{\bf{a}}} \right\} + 2\Re \left\{ {{\bf{r}}_3^H{\bf{a}}} \right\}
+2\Re \left\{ {{{\bf{a}}^T}({{\bf{R}}_1} - {{\bf{\Lambda }}_1}){{\bf{a}}_t}} \right\} + \frac{1}{{2\rho }}{\bf{r}}_4^T{\bf{a}}$.
Thus, problem \eqref{pro: MMSE A} is rewritten as
\begin{align}\label{pro: a}
 \mathop {\min }\limits_{\bf{a}} \quad
\Re \left\{ {{\bf{r}}_6^H{\bf{a}}} \right\} \quad\  \textrm{s.t.}\quad
 {a_i} \in \left\{ {0,1} \right\},
\end{align}
with ${{\bf{r}}_6} =2{{\bf{r}}_1} + 2{{\bf{r}}_2} + 2{{\bf{r}}_3} + 2({{\bf{R}}_1} - {{\bf{\Lambda }}_1}){{\bf{a}}_t} + \frac{1}{{2\rho }}{{\bf{r}}_4}$.
Let $\mathcal{M}$ denote the set of first $a$ minimum elements of $\Re\left\{{{\bf{r}}_6}\right\}$. Accordingly, the optimal solution to problem \eqref{pro: MMSE A} can be derived as
\begin{equation} \label{equ: opt_a}
{a}^{\mathrm{opt}}_{i} =
\begin{cases}
 1, &\Re\left\{{{\bf{r}}_6}\right\}_{[i]} \in \mathcal{M} , \\
 0, &\Re\left\{{{\bf{r}}_6}\right\}_{[i]}  \notin \mathcal{M}. 
\end{cases}
\end{equation}
It is observed from \eqref{equ: opt_a} that we need to find the 
first $a$ minimum elements from $\Re \left\{ {\bf{r}}_6\right\}$, i.e., ${a}_{i} = 1$, where the corresponding indexes are the locations of elements working in the connection mode.
\subsection{Optimization of Mode Switching Matrix $\tilde{\bf{A}}$}
By employing the penalty term and for fixed $\bf{F}$,$\bf{u}$, $\bm{\lambda}$, $\bf{\Phi}$ and ${\bf{A}}$, problem \eqref{pro: weighted sum rate} can be reformulated as 
\begin{align}\label{pro: MMSE A + A tilde}
 \mathop {\min }\limits_{\tilde{{\bf{A}}}}
\sum\limits_{k = 1}^K {{\alpha _k}({\lambda _k}{e_k} \!\!- \!\!\log {\lambda _k})} \!\!+\!\! \frac{1}{2\rho} || \mathbf{A}\!\! - \!\!\tilde{\bf{A}}\tilde{\bf{A}}^{H}||_{F}^{2}\;\;\;\textrm{s.t.}
 \eqref{con: A tilde},
\end{align}
where $1/(2\rho)$ is the penalty factor.
With fixed ${\mathbf{a}}$, 
the objective function can be expressed in terms of  $\tilde{\mathbf{a}}$ is 
\begin{equation} \label{pro: tilde A}
   {f_4}({\bf{\tilde A}}) = {{{\bf{\tilde r}}}_1}^H{\bf{\tilde a}} + {{{\bf{\tilde a}}}^H}{{{\bf{\tilde r}}}_1} + {{{\bf{\tilde r}}}_2}^H{\bf{\tilde a}} + {{{\bf{\tilde a}}}^H}{{{\bf{\tilde r}}}_2} + {{{\bf{\tilde a}}}^H}{{\bf{R}}_2}{\bf{\tilde a}},
\end{equation}
where
${{{\bf{\tilde r}}}_1} \!\!= \!\!\sum\limits_{k = 1}^K {{\alpha _k}{\lambda _k}} |u_k|^2\sum\limits_{m = 1}^K {(({{\bf{w}}_{{\rm{r}},m}}{{\bf{w}}^H_{{\rm{b}},m}}{{\bf{H}}^H_{{\rm{r}},k}}{\bf{(}}{{\bf{I}}_N}-{\bf{ A){\bm{\varphi}} }})}^* \!\!\otimes \!\!{{\bf{h}}_{{\rm{r}},k}})$, ${{\bf{R}}_2} \!\!=\!\! \sum\limits_{k = 1}^K {{\alpha _k}{\lambda _k}} \!\!\sum\limits_{m = 1}^K {({{\bf{w}}^{*}_{{\rm{r}},m}}}\! \otimes \!{u_k}{{\bf{h}}_{{\rm{r}},k}})({{\bf{w}}^{T}_{{\rm{r}},m}} \!\otimes \!{{\bf{h}}^{H}_{{\rm{r}},k}}{u}_{k}^{H}) \!- \!\frac{1}{\rho }({{\bf{I}}_a} \!\otimes\!{\bf{A}})$ and ${{{\bf{\tilde r}}}_2} =  - \sum\limits_{k = 1}^K {{\alpha _k}{\lambda _k}} u_k^H({{\bf{w}}^*_{{\rm{r}},k}} \otimes {{\bf{h}}_{{\rm{r}},k}})$.
Based on the second-order Taylor expansion, a convex surrogate function of ${{{\bf{\tilde a}}}^T}{{\bf{R}}_2}{\bf{\tilde a}}$ can be obtained as
\begin{align}
   {{{\bf{\tilde a}}}^T}{{\bf{R}}_2}{{\bf{\tilde a}}}  \le &{{{\bf{\tilde a}}}^T}{{\bf{\Lambda }}_2}{\bf{\tilde a}} + 2\Re \left\{ {{{{\bf{\tilde a}}}^T}({{\bf{R}}_2} - {{\bf{\Lambda }}_2})}\mathbf{\tilde a}_{t^{'}}\right\} \nonumber\\&+ \mathbf{\tilde a}_{t^{'}} ^{T}({{\bf{\Lambda }}_2} - {{\bf{R}}_2})\mathbf{\tilde a}_{t^{'}},
\end{align}
where ${{\bf{\Lambda }}_2} = {{\tilde \lambda }_{\max }}({{\bf{R}}_2}){{\bf{I}}_N}$, with ${{\tilde \lambda }_{\max }}({{\bf{R}}_2})$ denoting the maximum eigenvalue of ${{\bf{R}}_2}$.
Then, the objective function is expressed in terms of  $\tilde{\mathbf{a}}$ as
$ {{\bar f}_4}({\bf{\tilde A}}) = \Re \left\{ {{{{\bf{\tilde r}}}_3}^H{\bf{\tilde a}}} \right\}$,
where ${{{\bf{\tilde r}}}_3} = 2{{\bf{\tilde{r}}}_1} + 2{{\bf{\tilde{r}}}_2} + 2({{\bf{R}}_2} - {{\bf{\Lambda }}_2})\mathbf{\tilde a} _{t^{'}}$.
Thus, the sub-problem w.r.t. $\tilde{\mathbf{a}}$ is formulated as
\begin{subequations}\label{pro: a tilde}
 \begin{align}
 \mathop {\min }\limits_{\tilde{\mathbf{a}}} \quad
 & \Re \left\{ {\bf{\tilde r}_3}^H{\tilde{\mathbf{a}}} \right\}
 \\
 \quad\  \textrm{s.t.}\quad
 & \sum\nolimits_{i = 1}^N \bar{{a}}_{l,i} = 1, \; l=1,\cdots, a,\label{con: sum a = 1} \\
 & m\neq n, \; m,n \in \{v|\bar{{a}}_{l,v} = 1, l = 1,\cdots,a\},\label{con: different index in each N elements}
 \end{align}
\end{subequations}
where $\bar{{a}}_{l,i}$ denotes the $i$-th entry of the $l$-th segment in $\tilde{\bf{a}}$. Note that each segment of $\tilde{\bf{a}}$ consists of $N$ elements, which come from the corresponding column of $\tilde{\bf{A}}$. For problem \eqref{pro: a tilde}, constraint \eqref{con: sum a = 1} is the constraint of the total number of selected elements in each column of $\tilde {\bf{A}}$. Constraint \eqref{con: different index in each N elements} guarantees that $a$ different elements should be selected to operate in connection mode.

By temporarily relaxing constraint \eqref{con: different index in each N elements},
the solution to problem \eqref{pro: a tilde} is given by
\begin{equation}\label{equ: opt_tilde a}
{\bar a}_{l,i}\!\! =
\begin{cases}
 1, & \!\!i\!=\!\mathop{\arg\min}\limits_{m}\{{\bf c}[(l-1)N+1:lN]_{m}\}, \\
 0, & \!\!\textrm{Otherwise}, 
\end{cases}
\end{equation}
where ${\bf c} = \Re \left\{ {\bf{\tilde r}_3}\right\}$, $l\in\{1,\cdots,a\} $, $i,m\in\{1,\cdots, N\}$, and $\mathop{\arg\min}\limits_{m}\{{\bf c}[(l-1)N+1:lN]_{m}\}$ returns the index corresponding to the minimum value of ${\bf c}[(l-1)N+1:lN]$.
It is observed from \eqref{equ: opt_tilde a} that ${\bar a}^{\rm{\mathrm{opt}}}_{l,i} = 1$ holds when the $i$-th entry of ${\bf c}[(l-1)N+1:lN]$ is the minimum value among the $N$ elements of the l-th block of ${\bf c}$. By considering the constraint \eqref{con: different index in each N elements}, the mode switching matrix optimization should be further discussed in the following. 

Let $\mathcal{T} = \{i|\bar{a}^{\rm{opt}}_{l,i}=1,l=1,\cdots,a\}$ denote the set of selected indices and $M$ represent the number of distinct index types in $\mathcal{T}$. If $M = a$, the solution to problem \eqref{pro: a tilde} is given in \eqref{equ: opt_tilde a}.
If $M < a$, the number of selected connection elements is insufficient, in other words, constraint \eqref{con: different index in each N elements} is not satisfied. In this case, we need to reselect the index for the segments with repeated indexes. Specifically, We assign the repeated index to the segment that minimizes the objective function of problem \eqref{pro: a tilde} and then select the index corresponding to the sub-minimum value in the remaining segments until $M = a$. 

The above repeated reselection steps ensure that both constraint \eqref{con: different index in each N elements} holds and the objective function of problem \eqref{pro: a tilde} is minimized. Moreover, in the $(t+1)$-th round, the penalty term $\rho$ in the objective function is updated in a moderate step size $\eta \in (0,1)$, which is given by
\begin{equation} \label{equ: update rho}
\rho^{(t+1)} = \eta \rho^{(t)}.
\end{equation}

\begin{algorithm}[t]  \small
    \label{Algo: 1}
    \SetAlgoLined %显示end
	\caption{PWM Algorithm for Joint Beamforming and Mode Switching}%算法名字
	\KwIn{$K$, $N_{\rm{t}}$, $N$, $a$, $\mathbf{G}$, ${\bf{h}}_{{\rm{r}},k}$, ${{\sigma _k}}$, $\alpha_k$. }%输入参数
	Randomly initialize $\bm \varphi$, $\bf{a}$, $\bf{\tilde a}$ \;\label{step: 1 in A1} %\;用于换行
        Initialize $\mathbf{F} = [{\bf W}_{\mathrm{b}}^{T}$, ${\bf W}_{\mathrm{r}}^{T}]^{T}$ by the adaptive MRT-ZF initialization\; \label{step: 2 in A1}
        Calculate $u_k$ and ${{\lambda _k}}$ according to  \eqref{equ: opt_lammda} and \eqref{equ: opt_u} \;\label{step: 3 in A1}
        Update ${\bf W}_{\mathrm{b}}$, ${\bf W}_{\mathrm{r}}$ according to \eqref{equ: opt_f_R}\;\label{step: 4 in A1}
        \Repeat{the convergence is satisfied} 
        {
        Calculate $u_k$ and ${{\lambda _k}}$ according to \eqref{equ: opt_u} and \eqref{equ: opt_lammda};\label{step: 6 in A1}\\ 
        Update $\bm {\varphi}$ according to \eqref{equ: power iteration} until the objective function of \eqref{pro: rankone} converges;\label{step: 7 in A1}\\
        Update $\bf \tilde{A}$ according to \eqref{equ: opt_tilde a};\label{step: 8 in A1}\\
        Update $\bf A$ according to \eqref{equ: opt_a};\label{step: 9 in A1}\\
        Update $\mathbf{F}$ according to \eqref{equ: opt_f_R} and scale $\mathbf{F}$ according to \eqref{con: P};\\
        Update $\rho$ according to \eqref{equ: update rho}; \label{step: 11 in A1}
        }
	Return ${\bf A}^{\mathrm{opt}} = \operatorname{diag}(\bf{a}^{\mathrm{opt}})$, ${\bf{\tilde A}}^{\mathrm{opt}}$, ${\mathbf{W}}_{\rm{b}}^{\mathrm{opt}}$, ${\bf W}_{\rm{r}}^{\mathrm{opt}}$, ${\bf{\Phi}}^{\mathrm{opt}} = \operatorname{diag}(\bm{\varphi}^{\mathrm{opt}})^{H}$\;
    \KwOut{${\bf W}_{\mathrm{b}}$, ${\bf W}_{\mathrm{r}}$, ${\bf \Phi}$, ${\bf A}$, ${\bf {\tilde A}}$.}%输出
\end{algorithm}

The proposed PWM algorithm for solving problem \eqref{pro: weighted sum rate} is summarized in Algorithm \ref{Algo: 1}. The passive beamforming and index matrices are randomly initialized in Step \ref{step: 1 in A1}. The equivalent active beamforming matrix is initialized via the adaptive maximum ratio transmission (MRT) and zero-forcing (ZF) methods in Step \ref{step: 2 in A1}. Specifically, we apply the MRT method to enhance the strength of the transmit signal in the low-SNR regime. While for the high-SNR regime, the ZF method is utilized to suppress the inter-user interference (IUI). With the auxiliary variables calculated in Step \ref{step: 3 in A1}, we update the active begriming matrices in Step \ref{step: 4 in A1}. Finally, the mode switching matrix is updated based on other variables according to the MM method from Steps \ref{step: 6 in A1} to \ref{step: 11 in A1}. 

\subsection{Convergence and Complexity Analysis}
The sub-problem for updating ${\bm{\varphi}}$ is optimally solved and thus the objective function of problem \eqref{pro: rankone} is maximized, which is equivalent to problem \eqref{pro: weighted sum rate MMSE}. Moreover, the objective value of problem \eqref{pro: A} is non-increasing via the MM method, which guarantees the convergence of Algorithm \ref{Algo: 1}. Moreover, the computational complexity of the proposed algorithm is analyzed as follows. Specifically, for the active beamforming optimization, the computational complexity depends on the number of UEs $K$ and the numbers of BS antennas and RDARS transmit elements, given by  $\mathcal{O}(K(N_{\rm{t}} + a)^3)$. For the passive beamforming optimization, the calculation of $\bf{C}$ dominates the complexity, which is given by $\mathcal{O}(K^2 N^2)$. In addition, problem \eqref{pro: rankone} is solved by power iteration, and the complexity is $\mathcal{O}(I_{p}N^{2})$ where $I_{p}$ denotes the inner iteration steps required for convergence. For the mode switching matrix, the complexities of optimizing $\bf{A}$ and $\tilde{\bf{A}}$ mainly arise from eigenvalue decomposition (EVD), which are given by $\mathcal{O}((2N)^3)$ and $\mathcal{O}(N^3)$, respectively. Therefore, the total complexity of Algorithm $\ref{Algo: 1}$ is $\mathcal{O}(I(K(N_{\rm{t}} + a)^3 + K^2 N^2 + I_{p}(N)^{2} + 3 N^2) )$, where $I$ denotes the number of iterations.

\section{Model-Driven PWM-BFNet }
In this section, we propose a model-driven PWM-BFNet to mitigate the number of iterations required by PWM algorithm and escape from local optima, thus reducing the computational complexity while improving performance. 

Specifically, the PWM-BFNet leverages DL to learn crucial parameters associated with the convergence speed, i.e., the penalty term $\rho^{(t)}$ in each iteration and regularization parameter $\varepsilon^{(t)}$. 
After an in-depth presentation of the model-driven DL, a comprehensive comparison and analysis of the complexity for the proposed techniques and other existing algorithms is provided.

\subsection{Convergence Acceleration}
Though the convergence of the proposed PWM algorithm can be guaranteed, its convergence speed depends on the penalty term $\rho$ and regularization term $\varepsilon$.

% state the necessity of update \rho
Generally, it is preferable to set $\rho$ to moderate values during initialization, ensuring that the penalized objective function is primarily dominated by the original objective function rather than the penalty terms \cite{zhang_RDARS, wu2020joint}.
Specifically, it is observed from problems \eqref{pro: A}, \eqref{pro: tilde A} and Algorithm \ref{Algo: 1} that $\rho$ should be sufficiently large in the initial steps, thus ensuring that the first terms for the objective function of problem \eqref{pro: A} and \eqref{pro: tilde A} dominate these function values. Meanwhile, the aims of problem \eqref{pro: A} and \eqref{pro: tilde A} are to minimize the corresponding first terms.
When the beamforming matrices render the values of the first terms tending to be stable, a suitable $\rho$ guarantees that the associated terms dominate these function values, and the iteration processes aim to decrease the corresponding terms.
Therefore, when the penalty term of each iterative step is not suitable, the convergence speed of problem \eqref{pro: A} and \eqref{pro: tilde A} is greatly limited to by considerable degree, thus influencing the performance of the proposed algorithms. 
To mitigate the impact of the penalty term on the outer iteration, a model-driven DL is designed by selecting $\rho$ as a trainable variable.

In addition, another inner iteration related to $\varepsilon$ in \eqref{equ: power iteration} impacts the convergence speed of the proposed algorithm. Specifically, when $\varepsilon \rightarrow +\infty$, ${\bf{p}}^{(q+1)}= e^{j\mathrm{arg}(({\bf{D}}+\varepsilon {\bf{I}}_{N+1}){\bf{p}}^{(q)})} \rightarrow e^{j\mathrm{arg}(\varepsilon{\bf{p}}^{(q)})}={\bf{p}}^{(q)}$, thus causing the iteration meaningless. On the other hand, when $\varepsilon$ is too small, the positive definiteness of ${\bf{D}}+{\bf{I}}_{N+1}$ cannot be guaranteed, which may compromise the convergence speed.
It is observed from Algorithm \ref{Algo: 1} that the passive beamforming is updated by ensuring that the objective function of problem \eqref{pro: passive BF} converges in step \ref{step: 7 in A1}, and the complexity of this step is linear with the number of inner iterations. Therefore, we introduce a model-driven DL to train $\varepsilon$, and set the number of inner iterations as 1, so as to reduce the complexity of updating passive beamforming and accelerate the convergence speed of the proposed algorithm.

Moreover, the common data-driven method depends on the suitable neural network to train beam indices or complex beamforming matrices, where a high-quality dataset is needed. Besides, the large number of hidden layers in training network leads to a large amount of trainable parameters, which increases the training complexity and memory requirements. Therefore, by integrating the expert model
knowledge, we propose a model-driven DL to train a small number of parameters. To be specific, the optimal active beamforming vector in \eqref{equ: opt_f_R} can be represented by a simple solution structure as follows\cite{simple_solutin_structure_Emil},
\begin{equation}\label{equ: simple structure of opt_f_R}
 \mathbf{f}^{\rm{opt}}_{k} = \sqrt{p_{k}}\frac{(\mathbf{I}_{N_{\rm{t}}+a}+\sum_{m=1}^{K}\frac{\delta_{m}}{\sigma^2}\mathbf{h}^{H}_{m}\mathbf{h}_{m})^{-1}\mathbf{h}^{H}_{k} }{|| (\mathbf{I}_{N_{\rm{t}}+a}+\sum_{m=1}^{K}\frac{\delta_{m}}{\sigma^2}\mathbf{h}^{H}_{m}\mathbf{h}_{m})^{-1}\mathbf{h}^{H}_{k}||},
\end{equation}
with 
\begin{equation}\label{con: p and delta for UE}
\sum_{m=1}^{K}{p_m} = \sum_{m=1}^{K}{\delta_{m}} = P_{\rm{tot}}.
\end{equation}
To achieve a faster convergence speed, the active beamforming vector $\mathbf{f}_{k}$ can be initialized based on the simple structure in \eqref{equ: simple structure of opt_f_R}, and the parameters $p_k$ and $\delta_k$ are trainable parameters in the deep unfolding network. It is observed from the constraint \eqref{con: p and delta for UE} that the total value for $p_k$ and $\delta_k$ should be achieved. Therefore, the trainable parameters $p_k$ and $\delta_k$ are scaled by the total power constraint through a softmax layer as follows.
\begin{equation}
  p_k = \frac{e^{p^{'}_{k}}}{\sum_{m=1}^{K} e^{p^{'}_{m}}} P_{\rm{tot}},
\end{equation}
\begin{equation}
\delta_k = \frac{e^{\delta^{'}_{k}}}{\sum_{m=1}^{K} e^{\delta^{'}_{m}}} P_{\rm{tot}},
\end{equation}
where $p^{'}_{m}$ and $\delta^{'}_{m}$ are set as the input of the softmax layer. 

For the network training, the unsupervised learning is adopted to train the PWM-BFNet. Since the goal of problem \eqref{pro: weighted sum rate} is to maximize WSR, the loss function is formulated as:
\begin{equation}
\mathcal{L} = - \frac{1}{N_{\rm{s}}}\sum_{n=1}^{N_{\rm{s}}} f\left(\textrm{PWM-BFNet}(\mathbf{G}^{(n)}, \widetilde{\mathbf{H}}^{(n)}_{\rm{r}}) \right),
\end{equation}
where $N_{\rm{s}}$ is the number of channel realizations in the training dataset, $f(\cdot)$ denotes the objective function of problem \eqref{pro: weighted sum rate}, and $\textrm{PWM-BFNet}(\mathbf{G}^{(n)}, \widetilde{\mathbf{H}}^{(n)}_{\rm{r}})$ is the beamformer obtained by PWM-BFNet taking the $n$-th channel realization $\{\mathbf{G}^{(n)}, \widetilde{\mathbf{H}}^{(n)}_{\rm{r}}\}$ as input. Let $ \widetilde{\mathbf{H}}_{\rm{r}} = [{\bf{h}}_{{\rm{r}},1}, {\bf{h}}_{{\rm{r}},2}, \cdots,  {\bf{h}}_{{\rm{r}},K}]$.
The proposed model-driven PWM-BFNet algorithm is summarized in Algorithm \ref{Algo: 4}.
\begin{figure*}[t] 
 \centering
 \includegraphics[width=0.9\textwidth]{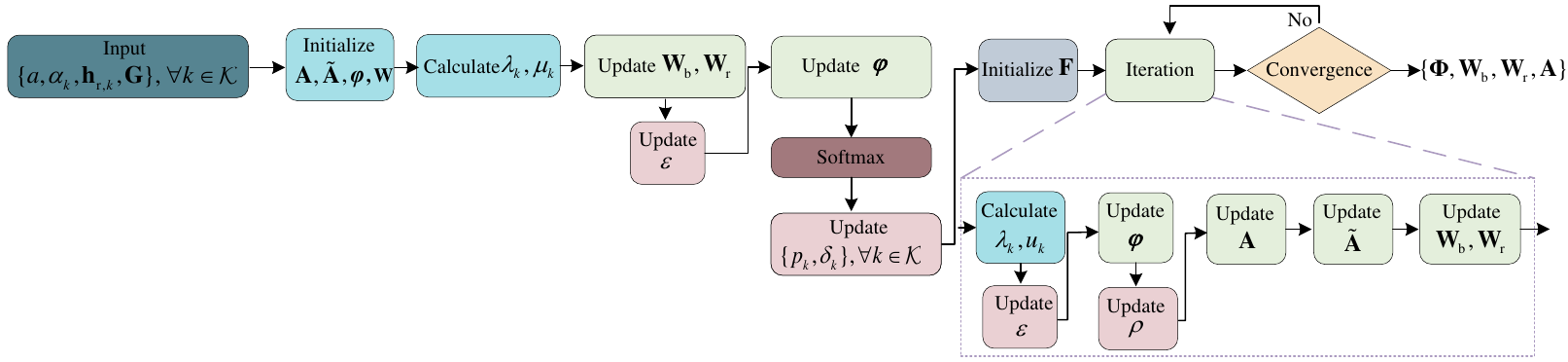}
 \caption{An illustration of the network structure for PWM-BFNet.}
 \label{fig: PWM-BFNet+}
\vspace{-5pt}
\end{figure*}
\begin{algorithm}[t]\small
    \label{Algo: 4}
    \SetAlgoLined %显示end
	\caption{PWM-BFNet  Algorithm for Joint Beamforming and Mode Switching}%算法名字
	\KwIn{$K$, $N_{\rm{t}}$, $N$, $a$, $\mathbf{G}$, ${\bf{h}}_{{\rm{r}},k}$, ${{\sigma _k}}$, $\alpha_k$. }%输入参数
	Randomly initialize $\bf{a}$, $\bf{\tilde a}$ and $\bm \varphi$\;\label{step: 1 in A4} %\;用于换行
        Initialize $\mathbf{F} = [{\bf W}_{\mathrm{b}}^{T}$${\bf W}_{\mathrm{r}}^{T}]^{T}$ with ZF beamforming\;  \label{step: 2 in A4}
        Calculate ${{\lambda _k}}$ and $u_k$ according to \eqref{equ: opt_u} and \eqref{equ: opt_lammda}\;\label{step: 3 in A4}
        Update ${\bf W}_{\mathrm{b}}$, ${\bf W}_{\mathrm{r}}$ according to \eqref{equ: opt_f_R}\;\label{step: 4 in A4}
        Update $\bm \varphi$ according to \eqref{equ: power iteration} and trainable variable $\varepsilon^{(0)}$\;\label{step: 5 in A4}
        Initialize $\mathbf{F}$ with $p_{k}$ and $\delta_k$, $\forall k \in \mathcal{K}$ \;
        \Repeat{the convergence is satisfied.} 
        {
        Calculate ${{\lambda _k}}$ and $u_k$ according to \eqref{equ: opt_u} and \eqref{equ: opt_lammda};\label{step: 6 in A4}\\ 
        Update $\bm {\varphi}$ according to \eqref{equ: power iteration} and trainable variable $\varepsilon^{(i)}$;\label{step: 7 in A4}\\
        Update $\bf \tilde{A}$ according to \eqref{equ: opt_tilde a} and trainable variable $\rho^{(i)}$;\label{step: 8 in A4}\\
        Update $\bf A$ according to \eqref{equ: opt_a} and trainable variable $\rho^{(i)}$;\label{step: 9 in A4} \\
        Update $\mathbf{F}$ according to \eqref{equ: opt_f_R} and scale $\mathbf{F}$ according to \eqref{con: P};\\
        \label{step: 11 in A4}
        }
	Return ${\bf A}^{\mathrm{opt}} = \operatorname{diag}(\bf{a}^{\mathrm{opt}})$, ${\bf{\tilde A}}^{\mathrm{opt}}$, ${\mathbf{W}}_{\rm{b}}^{\mathrm{opt}}$, ${\bf W}_{\rm{r}}^{\mathrm{opt}}$, ${\bf{\Phi}}^{\mathrm{opt}} = \operatorname{diag}(\bm{\varphi}^{\mathrm{opt}})^{H}$\;
    \KwOut{${\bf W}_{\mathrm{b}}$, ${\bf W}_{\mathrm{r}}$, ${\bf \Phi}$, ${\bf A}$, ${\bf {\tilde A}}$.}%输出
\end{algorithm}
\subsection{Complexity Analysis}
Next, we compare the complexity of the proposed algorithm with other joint beamforming algorithms, including the PWM, PWM-BFNet, RCB-BF \cite{ji2025reconfigurable}, and DLRCB-PWM algorithms, which are listed in Table \ref{tab 1: complexity comprasion}. For the RCB-BF algorithm, the computational complexity of active beamforming optimization arises from the codeword selection, i.e., $\mathcal{O}(K(N^2_{\rm{t}}+a^2))$. For the passive beamforming optimization, the water-filling iteration algorithm dominates the complexity, which is $\mathcal{O}(I_{\rm{o}}(KN^3 + K^2 N (N+a)))$, where $I_{\rm{o}}$ denotes the number of outer iterations. Thus, the total complexity of the RCB-BF algorithm is $\mathcal{O}(I_{\rm{o}}(KN^3 + K^2 N (N+a)) + K(N^2_{\rm{t}}+a^2))$.  
For the proposed DLRCB-PWM algorithm, the complexity of each iteration is the same as the PWM algorithm, where the initializations of beamforming matrices are obtained according to the RCB-based beamforming design. Due to the DLRCB initialization, the number of inner and outer iterations can be reduced, which leads to a lower complexity. Therefore, the total complexity of the DLRCB-PWM algorithm is $\mathcal{O}(I_{\rm{d}}(K(N_{\rm{t}} + a)^3 + K^2 N^2 + I_{\rm{pd}}N^{2} + 3 N^2) )$, where $I_{\rm{d}}$ denotes the number of outer iterations and $I_{\rm{pd}}$ denotes the number of inner iterations.
For the PWM-BFNet algorithm, the number of inner iteration steps is reduced to 1, which significantly decreases the complexity. Furthermore, the computational complexity of active beamforming is only caused by the initial step that reconstructs the beamforming vectors according to \eqref{equ: simple structure of opt_f_R}, i.e., $\mathcal{O}(K(N_{\rm{t}} + a)^2)$. Thus, the total complexity of the PWM-BFNet algorithm is 
$\mathcal{O}(I^{'}_{\rm{d}}(K(N_{\rm{t}} + a)^2 + K^2 N^2 + 5 N^3) )$, where $I^{'}_{\rm{d}}$ denotes the number of outer iterations. It is observed that the active beamforming vectors initialized by the PWM-BFNet algorithm are closer to the optimal solution, which indicates $I^{'}_{\rm{d}} < I_{\rm{d}} \ll I$, thus leading to a lower complexity and faster convergence speed.

\begin{table}
\caption{\textbf{Complexity Analysis}}%标题
\centering%把表居中
\begin{threeparttable}
\begin{tabular}{cccc}%四个c代表该表一共四列，内容全部居中
\toprule%第一道横线
\textbf{Algorithm}&\textbf{Computational Complexity} \\
\midrule%第二道横线 
PWM&$\mathcal{O}(I(K(N_{\rm{t}} + a)^3 + K^2 N^2 + I_{\rm{p}}N^{2} + 5 N^3) )$\\
RCB-BF&$\mathcal{O}(I_{\rm{o}}(KN^3+K^2N(N+a))+ K(N^2_{\rm{t}}+a^2))$ \\
DLRCB-PWM&$\mathcal{O}(I_{\rm{d}}(K(N_{\rm{t}} + a)^3 + K^2 N^2 + I_{\rm{pd}}N^{2} + 5 N^3) )$ \\
PWM-BFNet &$\mathcal{O}(I^{'}_{\rm{d}}(K(N_{\rm{t}} + a)^2 + K^2 N^2 + 5 N^3) )$ \\
\bottomrule%第三道横线
\end{tabular}
\label{tab 1: complexity comprasion}
\end{threeparttable}
\end{table}

% \end{figure*}

\section{Simulation Results}
In this section, numerical results are provided to verify the proposed schemes. The number of UEs is $K=4$, the number of antennas at BS is $16$, and the number of total elements for RDARS is $128$, respectively. The number of TEs is $a = 8$. 
The BS and the RDARS are located at (0, 0, 15) m and (10, 0, 15) m, respectively. The UEs are randomly distributed within a circle, where its center and radius are (10, 50, 2) m and 5 m, respectively. The path loss models are given by $c_0(\frac{d}{D_0})^{-\delta}$, where $c_0$ is the path loss at the reference distance $ D_0 =1$ m, $d$ denotes the link distance, and $\delta$ denotes the path loss exponent. We set $c_0 = 60.4$ dB. The path loss exponents of the BS-RDARS and RDARS-UE $K$ links are $\delta_{\mathrm{b}} = 2.2$ and $\delta_{\mathrm{r},k} = 2.4$, respectively. Other system parameters are set as follows: $\xi=10$, $\rho^{(0)} = 10^{6}$, $\eta = 10^{-3}$, and $\sigma_{k}^2 = -80$ dBm.
\iffalse
The DLRCB initialization network takes the channel $\tilde{\mathbf{H}}_{{\rm{r}}}$ as input and processes it through an FC layer with $D=128$ neurons. This is followed by a $4$-layer transformer encoder with a hidden dimension of $d=256$ and $h=8$ self-attention heads. Finally, the output $\hat{\mathbf{H}}_{{\rm{r}}}$ is passed through MLP networks, each consisting of two FC layers, which include a hidden layer with $64$ neurons, a Rectified Linear Unit (ReLU) activation function, and an output layer matching the corresponding beam codebook dimension. The whole network will be trained for $100$ epochs, and the dataset contains $100000$ channel realizations, where the AoA and AoD for each UE are chosen from $N$ non-overlapping equally-spaced subregions \cite{Elbir2020}. Specifically, each subregion can be regarded as a beam coverage, given by $\Theta^{\rm{u}}_{i} = [-1 + \frac{2i-2}{N}, -1+\frac{2i}{N}], i\in \mathcal{N}$, such that $\cup_{i\in \mathcal{N}} \Theta^{\rm{u}}_{i} =\Theta^{\rm{u}} =[-1,1]$. Similarly, the AOD between the BS and RDARS is chosen from $\Theta^{\rm{b}}$, given by $\cup_{i\in \mathcal{N_{\rm{t}}}} \Theta^{\rm{b}}_{i} =\Theta^{\rm{b}} =[-1,1]$ with $\Theta^{\rm{b}}_{i} = [-1 + \frac{2i-2}{N_{\rm{t}}}, -1+\frac{2i}{N_{\rm{t}}}]$. Furthermore, the RCBs corresponding to the BS, RDARS transmit elements, and passive elements are generated to obtain the optimal beam index, where the resolution of passive RCB is non-oversampling to reduce the complexity of beam classification.
\fi
For the PWM-BFNet algorithm, we set $\{\tilde{\rho}_{\rm{ini}}, \tilde{\rho}^{(i)}, \tilde{\varepsilon}_{\rm{ini}}, \tilde{\varepsilon}^{(i)}, p_k, \delta_k\}$ as trainable parameters, and the corresponding parameters are substituting into the PWM algorithm for $1\le i \le I^{'}_{\rm{d}}$ and $k \in \mathcal{K}$, where $\tilde{\rho}_{\rm{ini}} = \rho^{(0)}$, $\tilde{\rho}^{(i)} = \rho^{(i)}$, and $\tilde{\varepsilon}_{\rm{ini}} = \tilde{\varepsilon}^{(0)}=\varepsilon^{(0)}$ at the initialization step. Similarly, we generate 10,000 channel realizations to train the model-driven network. Note that the label corresponding to each sample is not required due to the unsupervised learning. PyTorch, a DL-based architecture, is used for building and training the proposed algorithms, and the networks are trained by using the SGD optimizer. The momentum is set as 0.7. The number of batches in each epoch is $1000$, where the number of samples in each batch is 10. Moreover, the number of training epochs is 30. The main simulation parameters are summarized in Table \ref{tab: 2}.
 \begin{table}[t]
\centering % no center environment
\begin{threeparttable}
\caption{The Main Simulation Parameters.}
\label{tb:listbridges}
\begin{tabular}{llll}
\toprule
\textbf{Description} & \textbf{Parameter} & \textbf{Value} \\
\midrule
Number of transmit antennas at BS & $N_{\rm{t}}$ & 16 \\
Number of RDARS elements& $N$ & 128 \\
Number of connected elements& $a$ & 8\\
Noise power& $\sigma_{k}^2$ & -80 dBm \\
Number of UEs& $K$ & 4 \\
Path loss exponent of channel ${\bf{G}}$& $\delta_b$ & 2.2 \\
Path loss exponent of channel ${\bf{h}}_{{\rm{r}},k}$& $\delta_{{\rm{r}},k}$ & 2.4 \\
Penalty term in the initialization step& $\rho^{(0)}$ & $10^{6}$ \\
Step size& $\eta$ & $10^{-3}$ \\
Number of batches&  & 1,000 \\
Number of samples&  & 10 \\
Number of training epochs&  & 30 \\
Number of channel realizations&  & 10,000\\
Momentum&  & 0.7\\
Optimizer& &SGD\\
\bottomrule
\end{tabular}
\label{tab: 2}
\end{threeparttable}
\end{table}
%% 补充上learning rate, decay, momentum, optimizer等参数

For comparison, the following architectures are considered:
\textbf{1) RDARS PWM-BFNet:}  The RDARS-aided system with the PWM-BFNet is considered; \textbf{2) RDARS PWM:} The RDARS-aided system with the PWM algorithm is considered; \textbf{3) RDARS DLRCB-PWM:} The RDARS-aided system with the DLRCB-PWM algorithm is considered, where the initializations of beamforming matrices are generated by RCBs; \textbf{4) RDARS Fixed Index:} The RDARS-aided system with the PWM algorithm is considered, where the the working modes of RDARS elements are fixed; The first $a$ elements are chosen as the transmit elements, and the others are passive elements; \textbf{5) DAS:} The DAS is considered, where the number of BS antennas $\tilde{N}_{\rm{t}}$ is equal to $N_{\rm{t}}$, and the number of distributed antennas $\tilde{a}$ is equal to $a$. The active beamforming matrix is optimized based on the PWM algorithm, which indicates that $a$ distributed antennas are located in optimal positions determined in the RDARS-aided system; \textbf{6) RIS:} The RIS-aided system is considered, where the number of RIS elements is $\tilde{N} = N$. The active and passive beamforming matrices are optimized based on the PWM algorithm.
Furthermore, the performance of the PWM-BFNet trained in different parameters is compared. 

\subsection{Convergence Analysis}
\begin{figure}[t] 
 \centering
  \includegraphics[width=0.4\textwidth]{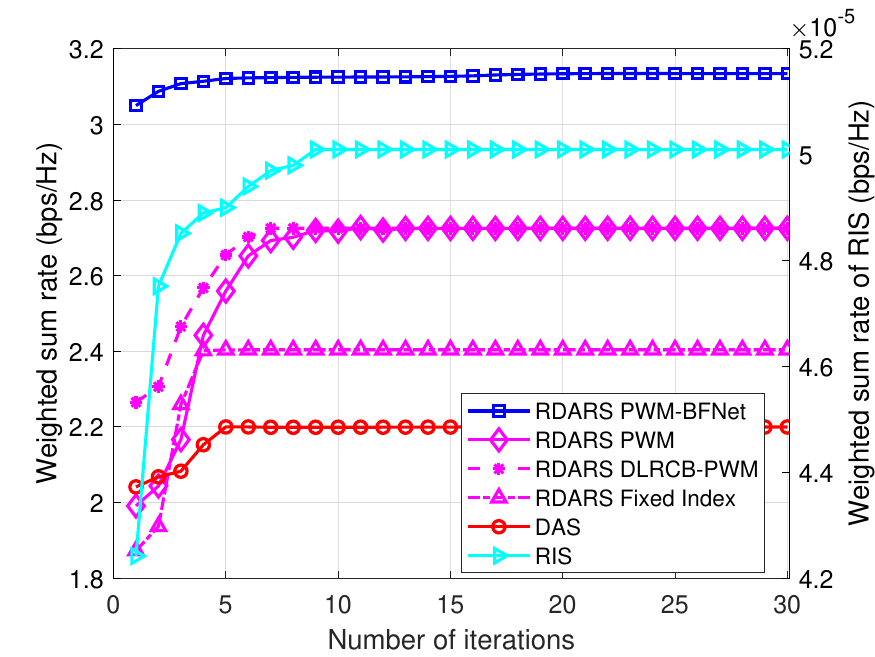}
   \vspace{-5pt}
 \caption{Convergence behaviors of proposed algorithms. The curves corresponding to RIS are scaled by the right y-axis, while the curves corresponding to the other schemes are scaled by the left y-axis.}
 \label{fig: 1_Number_of_iterations.pdf}
 \vspace{-5pt}
\end{figure}

In Fig, \ref{fig: 1_Number_of_iterations.pdf}, we plot the WSR versus the number of iterations with the different architectures. 
It is observed that the PWM-based algorithm achieves convergence for different architectures, which verifies the effectiveness of this algorithm.  
Besides, the DLRCB-PWM algorithm has a better convergence performance than the PWM algorithm. This is because the RCB-based initialization generates better initialization points for the beamforming matrices. 
It is also observed that the PWM-BFNet significantly improves the performance due to the integration of the model-driven DL. Specifically, the parameters in each iteration step are trained for multiple channel realizations, so as to escape the local optimum of the PWM algorithm. 
Moreover, the performance of the PWM-BFNet algorithm after one iteration attains a significant performance enhancement compared to the PWM algorithm after convergence, and the number of iterations required for convergence is reduced.
By incorporating the simple solution structure of active beamforming and model-driven DL, PWM-BFNet achieves a fast convergence performance, which converges around 5 iterations. Moreover, it is observed that PWM-BFNet yields the best results, which is attributed to its more effective initialization. Specifically, the simple structure solutions of active beamforming matrices reconstructed by trained parameters are close to spatial directions of the optimal solutions. In addition, the inner iterations of the passive beamforming are reduced to 1, which accelerates the convergence.

Furthermore, the RDARS architecture achieves a better performance than the DAS and RIS architectures, even for the fixed element configuration, which demonstrates the inherent benefits of RDARS. Moreover, the WSR of the RDARS-aided system with mode switching optimization is significantly improved compared with the fixed index scheme, which verifies the necessity of optimizing the mode switching.

%\vspace{-5pt}
\subsection{Robustness Analysis}
In this subsection, the performance of the proposed algorithms is evaluated by considering the impacts of total transmit power $P_{\rm{tot}}$, numbers of UEs $K$, numbers of RDARS elements $N$, and the Rician factor $\xi$. Specifically, the curves of RDARS with the PWM-BFNet algorithm (trained at corresponding parameters) denote that the networks to be tested are trained at corresponding parameters. By contrast, the curves of RDARS with the PWM-BFNet algorithm (trained at a given parameter) indicate that the networks to be tested are trained always at a given parameter. For example, the blue curve using the cross-shaped markers in Fig. \ref{fig: 2_Ptot for different archi} indicates that the test networks are all trained at $P_{\rm{tot}} = 30$ dBm, while the blue square curve represents the test networks are trained at corresponding total transmit powers.

It is observed from Figs. \ref{fig: 2_Ptot for different archi}-\ref{fig: 7_number of TEs} that the proposed PWM-BFNet outperforms other algorithms in various scenarios, by leveraging the inherent advantages of the model-driven DL and the proposed PWM algorithm. 
It is worth mentioning that the proposed PWM-BFNet maintains a superior performance under different parameters which implies its robustness to variations in system setups. 
This is attributed to the ability of PWM-BFNet in capturing the mapping function from channels to beamforming vectors and model switching matrix, a task that is arguably performed more effectively than conventional data-driven DL approaches. 
This inherently improves performance and robustness. Furthermore, the slight performance degradation of the DL method due to differences between training and test environments is mitigated by subsequent iterations.

\subsubsection{Impact of The Total Transmit Power}
\begin{figure}[t] 
 \centering
  \includegraphics[width=0.4\textwidth]{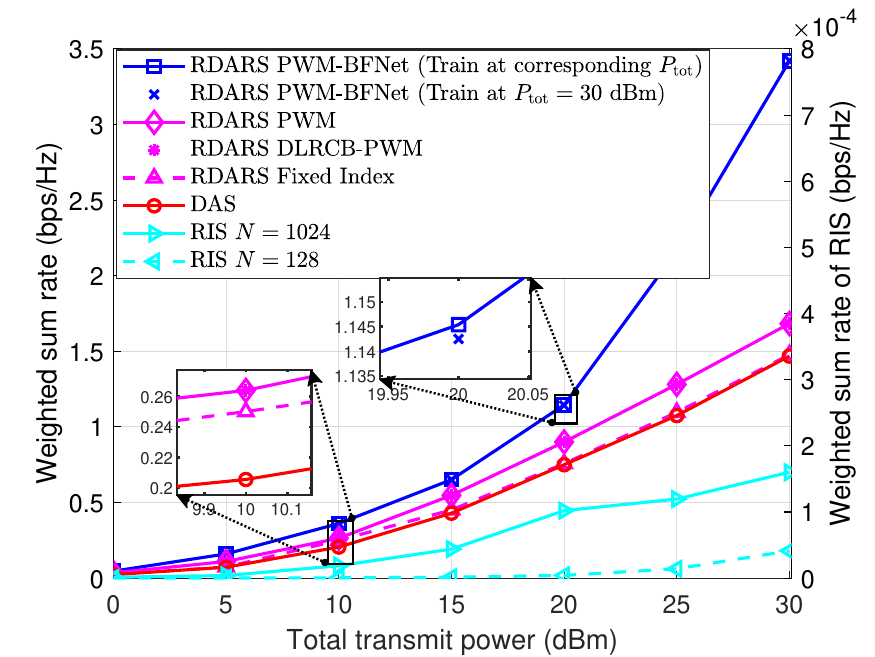}
\caption{WSR versus total transmit power with different architectures.}
\label{fig: 2_Ptot for different archi}
\end{figure}
Fig. \ref{fig: 2_Ptot for different archi} plots the WSR versus the total transmit power for different architectures. The PWM-BFNet achieves a better performance than the PWM algorithm, and the performance gap increases with the total transmit power. This is because the IUI increases, and the PWM algorithm is prone to fall into local optimality. Furthermore, the RDARS architecture with the fixed index scheme achieves a performance comparable to that of DAS, and the RDARS with the mode switching has a further performance improvement. This demonstrates the extra mode selection gain provided by dynamic element configurations of RDARS.

\subsubsection{Impact of The Number of UEs}
\begin{figure}[t] 
 \centering
  \includegraphics[width=0.4\textwidth]{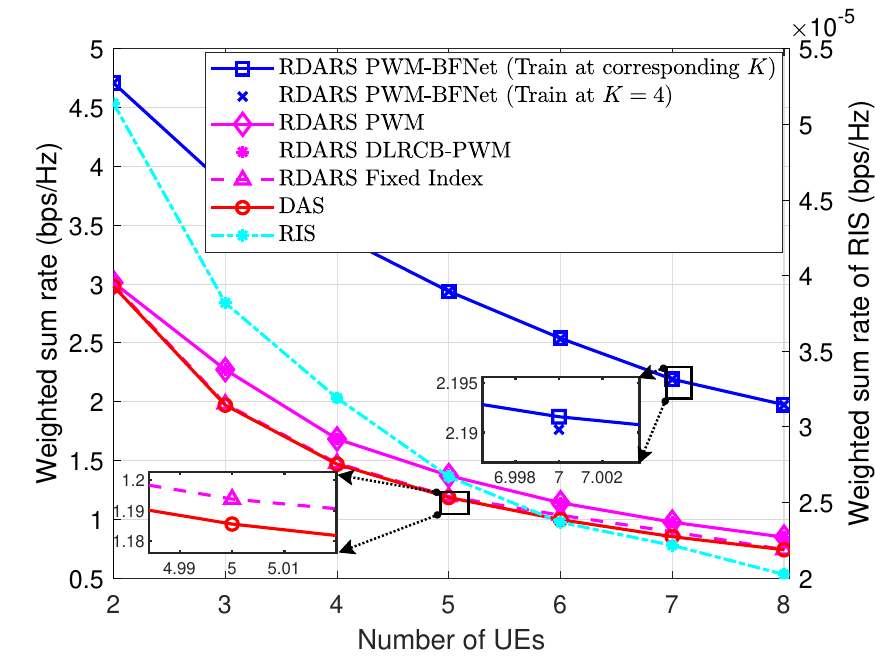}
\caption{WSR versus the number of UEs with different architectures.}
\label{fig: 3_number of UEs}
\end{figure}

In Fig. \ref{fig: 3_number of UEs}, we plot the WSR versus the number of UEs for different architectures.
It is observed that the performance curves of all schemes decrease with the number of UEs. The reason is that the spatial diversity gain is limited as the number of UEs increases, leading to more severe IUI. 
Moreover, the PWM-BFNet trained at $K = 4$ still remains a high performance for scenarios with different numbers of UEs. Specifically, the WSR of PWM-BFNet trained at $K = 4$ when $K = 5$ is comparable to that of the PWM algorithm when $K = 2$, which shows the advantage of model-driven DL. Furthermore, the performance of the RIS-aided system has a notable decline with the number of UEs, which shows that this system is more sensitive to the number of UEs compared with other architectures. This is because the multiplicative fading suffered by RIS significantly impacts the system performance for a large number of UEs.

\subsubsection{Impact of The Number of RDARS elements}
\begin{figure}[t] 
 \centering
  \includegraphics[width=0.4\textwidth]{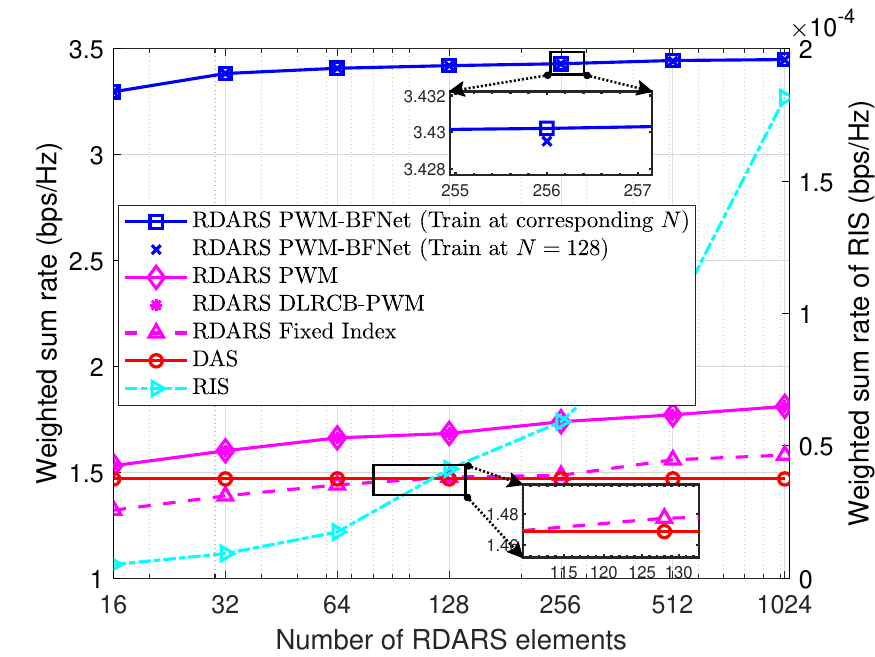}
\caption{WSR versus the number of RDARS elements with different architectures.}
\label{fig: 4_number of RDARS elements}
\vspace{-5pt}
\end{figure}
In Fig. \ref{fig: 4_number of RDARS elements}, we plot the WSR versus the number of RDARS elements with the different architectures.
It is observed that the performance curves of RDARS-aided system increase with the number of RDARS elements. Specifically, the performance enhancement arising from the increase in the number of RDARS elements is limited, which is due to the limited passive beamforming gain provided by passive elements.
Moreover, the RDARS-aided system with around 100 elements using the fixed index scheme achieves a comparable performance to the DAS. This demonstrates the need for dynamic element mode switching, especially for widely fluctuating channel conditions. 

% \begin{figure}[t] 
%  \centering
%   \includegraphics[width=0.4\textwidth]{Conver_PWM_Compare_MRT_ZF_V1030.pdf}
% \caption{Convergence Behavior.}
% \label{fig: PWM iteration}
% \end{figure}

\subsubsection{Impact of The Rician Factor}
\begin{figure}[t] 
 \centering
  \includegraphics[width=0.4\textwidth]{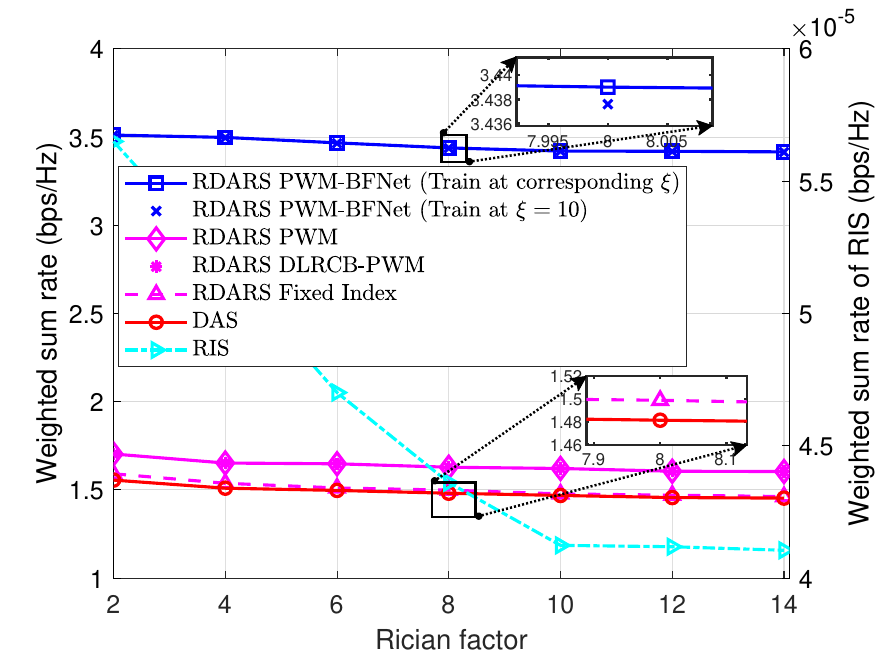}
\caption{WSR versus the Rician factor with different architectures.}
\label{fig: 5_number of rician factor}
\vspace{-5pt}
\end{figure}
In Fig. \ref{fig: 5_number of rician factor}, we plot the WSR versus the Rician factor with the different architectures.
It is seen that the WSR curves decrease as the Rician factor increases. This is because as the Rician factor increases, the line-of-sight (LoS) component dominates the channel, thus resulting in the spatial diversity loss, i.e., the BS-RDARS channel. Thus, the mitigation of IUI is limited due to more significant interference. However, the performance degradation fluctuates within a small range, which further demonstrates the robustness of the proposed algorithms.

\subsubsection{Impact of The Number of TEs}
\begin{figure}[t] 
 \centering
  \includegraphics[width=0.4\textwidth]{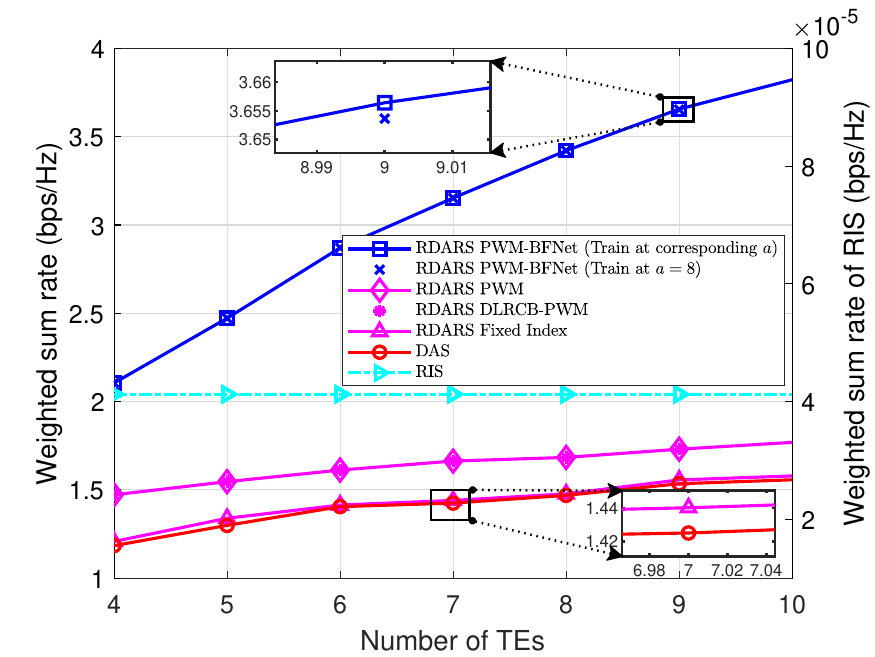}
\caption{WSR versus the number of RDARS transmit elements with different architectures.}
\label{fig: 7_number of TEs}
\vspace{-5pt}
\end{figure}
In Fig. \ref{fig: 7_number of TEs}, we plot the WSR versus the number of RDARS TEs with the different architectures. 
It is observed that the performance of PWM-BFNet has a considerable performance improvement compared to that of the PWM algorithm. The reason is that the PWM algorithm is more easily trapped into a local optimum, due to the intractable integer constraint of the mode switching matrix, especially for a larger number of RDARS TEs.

\section{Conclusion}
In this paper, we considered a RDARS-aided downlink MIMO system, where the WSR was maximized by jointly optimizing the beamforming matrices for the BS and RDARS and the mode switching matrix for RDARS. The PWM algorithm was first proposed to solve the MINLP problem by utilizing the WMMSE, MM, and power iteration algorithms.
\iffalse
Subsequently, the high-quality dataset was generated based on the PWM algorithm, and the DLRCB-PWM algorithm was proposed to accelerate the convergence speed by applying the DL-based initialization of beamforming vectors.  
\fi
Moreover, the parameters determining the convergence speed of the PWM algorithm and the system performance were analyzed, and then the PWM algorithm was deeply unfolded into a model-driven DL network, i.e., PWM-BFNet, with these parameters trained.
Simulation results verified the PWM-BFNet achieved a significantly high convergence speed and considerable performance improvement, especially at a high total transmit power and a large number of RDARS TEs. Meanwhile, compared to the benchmark schemes, the improved initialization and trainable variables reduced the number of iterations required for convergence, which significantly reduced the computational complexity.
The results in this paper showed the superiority of the RDARS-aided system and illustrated the effectiveness of integrating the model-driven DL into joint beamforming and mode switching design for this system. 
\vspace{-5pt}
\appendices
% \section*{Acknowledgment}
\section*{Appendix A: Proof of Lemma \ref{Lemma: WMMSE equivalent pro}} \label{appendix: proof problem WMMSE}
With \eqref{equ: y_k}, the MSE of $y_k$ is 
\begin{align} \label{equ: MSE in Proof of Proof of Proposition 5}
 e_k &= \mathbb{E}\{(\hat{y}_k - y_k) (\hat{y}_k - y_k)^{H}  \} \nonumber \\
 % & =  \mathbb{E}\{ (( (u^{H}_k{\bf{h}}_k{\bf{f}}_k)-1) s_k  \\& 
 % + \sum_{i\neq k}^{K} u^{H}_k{\bf{h}}_k{\bf{f}}_is_i +u^{H}_k n_k) (( (u^{H}_k{\bf{h}}_k{\bf{f}}_k)-1) s_k \\&
 % + \sum_{i\neq k}^{K} u^{H}_k{\bf{h}}_k{\bf{f}}_is_i +u^{H}_k n_k)^{H} \} \nonumber \\
 & = \mathbb{E}\{\left( (u^{H}_k{\bf{h}}_k{\bf{f}}_k)-1\right) \left( (u^{H}_k{\bf{h}}_k{\bf{f}}_k)-1\right)^H \nonumber  \\&+   u^{H}_k \sum_{i\neq k}^{K} {\bf{h}}_k{\bf{f}}_i{\bf{f}}^{H}_i{\bf{h}}^{H}_i  u_k   + u^{H}_k n_k n^{H}_k u_k   \}  \nonumber \\
 & =1 - u^{H}_k{\bf{h}}_k{\bf{f}}_k - {\bf{f}}^{H}_k {\bf{h}}^{H}_k u_k +  u^{H}_k \sum_{i\neq k}^{K} {\bf{h}}_k{\bf{f}}_i{\bf{f}}^{H}_i{\bf{h}}^{H}_i  u_k  \nonumber\\
 &+  u^{H}_k  u_k \sigma^2_k\sum_{m=1}^{K}\frac{ {\bf{f}}^{H}_m {\bf{f}}_m}{P_{\rm{tot}}}.
\end{align}
An auxiliary term $\sum_{m=1}^{K}({\bf{f}}^{H}_m {\bf{f}}_m)/P_{\rm{tot}}$ is introduced to the last term of \eqref{equ: MSE in Proof of Proof of Proposition 5} to remove the maximum transmit power constraint in \eqref{pro: weighted sum rate}.
Furthermore, given other fixed variables, by letting the first-order derivative of $e_k$ with respect to $\mu_k$ be equal to zero, we have 
\begin{align}\label{equ: derivation of u_k}
  \frac{\partial e_k}{\partial u_k} 
  &= -{\bf{h}}_k {\bf{f}}_k + \sum_{i = 1}^{K} {\bf{h}}_k{\bf{f}}_i{\bf{f}}^{H}_i{\bf{h}}^{H}_k u^{H}_k +  u_k \sigma^2_k = 0.
\end{align}
The optimal solution of ${u_k}$ in \eqref{pro: weighted sum rate MMSE} is given by
\begin{equation}\label{equ: opt u_k in Proof of Proposition 5}
u^{\rm{opt}}_k  = {{J}}_k^{-1} {\bf{h}}_k{\bf{f}}_k,
\end{equation}
where ${{J}}_k= \sum_{i = 1}^{K} {{\bf{h}}_k{\bf{f}}_i{\bf{f}}^{H}_i{\bf{h}}^{H}_k} + \sigma^2_k $.
Similarly, by letting the first-order derivative of $e_k$ with respect to the auxiliary term $\lambda_k$ be equal to zero, we have
\begin{equation}\label{equ: opt labmda in Proof of Proposition 5}
\lambda^{\rm{opt}}_k = e^{-1}_k.
\end{equation}
By substituting $u^{\rm{opt}}_k$ and $\lambda^{\rm{opt}}_k$,  $\forall k\in \mathcal{K}$, into problem \eqref{pro: weighted sum rate MMSE}, we have the following equivalent problem
\begin{subequations}\label{pro: weighted sum rate MMSE equivalent in Proof of Proposition 5}
  \begin{align}
    \mathop {\max }\limits_{{{\bf{F}}},{\bf{\Phi }},{\bf{A}},{\tilde{\bf{A}}}}\;
  & \;\;\sum\limits_{k = 1}^K {{\alpha _k}\log_2 {(e^{\rm{min}}_k) ^{-1}}}
  \\\;\textrm{s.t.}\;
  & \eqref{con: Phi}, \eqref{con: A}, \eqref{con: A tilde}, \eqref{con: A +A tilde},
  \end{align}
\end{subequations}
with $e^{\rm{min}}_k = (1 - {\bf{f}}^{H}_k {\bf{h}}^{H}_k J_k^{-1} {\bf{h}}_k{\bf{f}}_k) ^{-1}$.
By applying the Woodbury matrix identity, we have
\begin{align} \label{equ: e_min in Proof of Proposition 5}
  \log_2{(e^{\rm{min}}_k) ^{-1}} = \log_2{(1 + {\bf{h}}_k{\bf{f}}_k  {\bf{f}}^{H}_k {\bf{h}}^{H}_k \varUpsilon_k^{-1} ) },
\end{align}
with 
\begin{equation}\label{equ: varUpsilon_k}
\varUpsilon_k = J_k^{-1} -  {\bf{h}}_k{\bf{f}}_k  {\bf{f}}^{H}_k {\bf{h}}^{H}_k.
\end{equation}
By combining \eqref{pro: weighted sum rate MMSE equivalent in Proof of Proposition 5} and \eqref{equ: e_min in Proof of Proposition 5}, the proof of Lemma \ref{Lemma: WMMSE equivalent pro} is thus completed.
\vspace{-5pt}
\appendices
\section*{Appendix B: Proof of Lemma~\ref{lemma: Power iteration equivalent}} \label{appendix: proof PI}
With ${\bf{p}}^{H} {\bf{D}}^{'} {\bf{p}} = {\bf{p}}^{H} {{\bf{D}} + \varepsilon {\bf{I}}_{N}} {\bf{p}} = {\bf{p}}^{H} {\bf{D}} {\bf{p}} + (N+1)$, problem \eqref{pro: rankone of D_pie} is thus equivalent to problem \eqref{pro: rankone}.
Then, based on the definition, we have
% \begin{align}
%   \Re\{\left({\bf{p}}^{(q+1)} \right)^{H} {\bf{D}}^{'} {\bf{p}}^{(q)}\} 
%   & = \mathop {\max }\limits_{|p_{n} = 1|}  \Re\{{\bf{p}} ^{H} {\bf{D}}^{'} {\bf{p}}^{q}\}\nonumber \\
%   & \geq \left({\bf{p}}^{(q)} \right)^{H} {\bf{D}}^{'} {\bf{p}}^{(q)}.
% \end{align} 
\begin{align}
  \Re\{({\bf{p}}^{(q+1)} )^{H} {\bf{D}}^{'} {\bf{p}}^{(q)}\}\! =\! \mathop {\max }\limits_{|p_{n} = 1|}\!  \Re\{{\bf{p}} ^{H} {\bf{D}}^{'} {\bf{p}}^{q}\}\nonumber 
\!\! \geq \!\!\left({\bf{p}}^{(q)} \right)^{H} {\bf{D}}^{'} {\bf{p}}^{(q)}.
\end{align} 
When ${\bf{p}}^{(q+1)} \neq  {\bf{p}}^{(q)} $ and the matrix ${\bf{D}}^{'} $ is positive definite, we have 
$({\bf{p}}^{(q+1)} - {\bf{p}}^{(q)} )^{H} {\bf{D}}^{'} ({\bf{p}}^{(q+1)} - {\bf{p}}^{(q)} ) > 0.$
It then follows that
\begin{align}
  ({\bf{p}}^{(q+1)})^{H} {\bf{D}}^{'} {\bf{p}}^{(q+1)} 
  &\!>\! 2\Re\{({\bf{p}}^{(q+1)})^{H} {\bf{D}}^{'} {\bf{p}}^{(q)}\}  - ({\bf{p}}^{(q)} )^{H}{\bf{D}}^{'} {\bf{p}}^{(q)} \nonumber \\
  &\!>\!({\bf{p}}^{(q)} )^{H}{\bf{D}}^{'} {\bf{p}}^{(q)},
\end{align}
which implies that the objective function of problem \eqref{pro: rankone of D_pie} monotonically increases with ${\bf{p}}$.
Furthermore, the objective function of problem \eqref{pro: rankone of D_pie} is upper-bounded by
\begin{align}
  {\bf{p}}^{H}{\bf{D}}^{'}{\bf{p}} &= \sum_{m = 1}^{N+1}\sum_{n=1}^{N+1} p^{*}_{m} {{D}}^{'}_{m,n}p_{n}\leq \sum_{m = 1}^{N+1}\sum_{n=1}^{N+1}|p^{*}_{m} {{D}}^{'}_{m,n}p_{n}| \nonumber\\& \leq
  \sum_{m = 1}^{N+1}\sum_{n=1}^{N+1}| {{D}}^{'}_{m,n}|.
\end{align}
Thus, the power iteration algorithm is guaranteed to converge to a stationary point $\bar{\bf{p}}$, given by
\begin{equation}
  {\bf{D}}^{'} \bar{\bf{p}} = \operatorname*{abs}({\bf{D}}^{'} \bar{\bf{p}}) \odot e^{j \operatorname*{arg}({\bf{D}}^{'} \bar{\bf{p}})} =  \operatorname*{abs}({\bf{D}}^{'} \bar{\bf{p}}) \odot\bar{\bf{p}}.
\end{equation}
Furthermore, we have 
\begin{align}
 & {\bf{p}}^{H} (\operatorname*{diag}( \operatorname*{abs}({\bf{D}}^{'} \bar{\bf{p}})) \!-\! {\bf{D}}^{'}){\bf{p}} 
  \nonumber\\ &= {\bf{p}}^{H}\operatorname*{diag}( \operatorname*{abs}({\bf{D}}^{'} \bar{\bf{p}})) {\bf{p}} - {\bf{p}}^{H}{\bf{D}}^{'}{\bf{p}} \nonumber \\
  & = \sum_{m=1}^{N+1} \left| \sum_{n=1}^{N+1}{{D}}^{'}_{m,n}p_{n}  \right| - \sum_{m=1}^{N+1}\sum_{n=1}^{N+1}{\bf{p}}^{*}_{m}{{D}}^{'}_{m,n}p_{n} \nonumber\\
  &\geq \left|    \sum_{m=1}^{N+1}\sum_{n=1}^{N+1}{\bf{p}}^{*}_{m}{{D}}^{'}_{m,n}p_{n} \right| -  \sum_{m=1}^{N+1}\sum_{n=1}^{N+1}{\bf{p}}^{*}_{m}{{D}}^{'}_{m,n}p_{n} \nonumber \\
  & \geq 0.
\end{align}
Therefore, $\bar{\bf{p}}$ is a local optimum of problem \eqref{pro: rankone} based on the results revealed in  Appendix B of \cite{PI_2014}. This thus completes the proof of Lemma \ref{lemma: Power iteration equivalent}.
\vspace{-5pt}
\section*{Appendix C: Derivation of Auxiliary Parameters} \label{appendix: proof Auxiliary Parameters}

For any ${\bf{a}}$, the last term of the objective function of \eqref{pro: MMSE A} can be expressed as
\begin{align}
\frac{1}{2 \rho} ||{\bf{A}} - \tilde{\bf{A}}\tilde{\bf{A}} ^{H} ||^2_{{F}} 
&= \frac{1}{2 \rho} \operatorname*{Tr}\left( ({\bf{A}}^{H} - \tilde{\bf{A}}\tilde{\bf{A}} ^{H}   ) ({\bf{A}} -\tilde{\bf{A}}\tilde{\bf{A}} ^{H}  )\right)\nonumber \\
& = \frac{1}{2 \rho} \operatorname*{Tr}\left( {\bf{A}}({\bf{I}}_N - 2\tilde{\bf{A}}\tilde{\bf{A}} ^{H}  ) +  
\tilde{\bf{A}}\tilde{\bf{A}} ^{H}
\right) \nonumber \\
& = \frac{1}{2 \rho} ({\bf{r}}^{T}_4  {\bf{a}} + r_5   ),
\end{align}
where ${{\bf{r}}_4}$ and $r_5$ are given in \eqref{equ: auxiliary parameters of A}.
The objective function of \eqref{pro: MMSE A} can be rewritten as
\begin{align} \label{equ: f_1 of A in appendix C}
f_3({\bf{A}}) = \sum_{k=1}^{K} \alpha_k \lambda_k e_k({\bf{a}}) + \frac{1}{2 \rho} ||{\bf{A}} - \tilde{\bf{A}}\tilde{\bf{A}}^{H} ||^2_{{F}}.
\end{align}
with $e_k({\bf{a}})$ given by
% \begin{align}
% {e_k}({\bf{a}}) 
%  &= u_k^H{{\bm{\varphi }}^H} \operatorname{diag}({{\bf{H}}_{r,k}}{{\bf{w}}_{b,k}}){\bf{a}} + {u_k}{{\bf{a}}^T} \operatorname{diag}({\bf{w}}_{b,k}^H{{\bf{H}}_{r,k}}^H){\bm{\varphi }} \nonumber\\
%  &- {{\bf{a}}^T}u_k^H{u_k}{{\bf{\Phi }}^H}{{\bf{H}}_{r,k}}\sum\limits_{m = 1}^K {{{\bf{w}}_{b,m}}{\bf{w}}_{b,m}^H} {\bf{H}}_{r,k}^H{\bm{\varphi }} \nonumber\\
% &
%  - u_k^H{u_k}{{\bm{\varphi }}^H}{{\bf{H}}_{r,k}}\sum\limits_{m = 1}^K {{{\bf{w}}_{b,m}}{\bf{w}}_{b,m}^H} {\bf{H}}_{r,k}^H{\bf{\Phi a}} \nonumber\\
%  &+ u_k^H{u_k}{{\bf{a}}^T}{{\bf{\Phi }}^H}{{\bf{H}}_{r,k}}\sum\limits_{m = 1}^K {{{\bf{w}}_{b,m}}{\bf{w}}_{b,m}^H} {\bf{H}}_{r,k}^H{\bf{\Phi a}}\nonumber\\
% & - u_k^H{u_k}\sum\limits_{m = 1}^K {{\bf{h}}_{r,k}^H{\bf{\tilde A}}{{\bf{w}}_{r,m}}{\bf{w}}_{b,m}^H} {{\bf{H}}_{r,k}}^H{\bf{\Phi a}} 
%  \nonumber\\
%  &- u_k^H{u_k}{{\bf{a}}^T}{{\bf{\Phi }}^H}{{\bf{H}}_{r,k}}\sum\limits_{m = 1}^K {{{\bf{w}}_{b,m}}{\bf{w}}_{r,m}^H} {{{\bf{\tilde A}}}^H}{{\bf{h}}_{r,k}}.
% \end{align}
${e_k}({\bf{a}}) = u_k^H{{\bm{\varphi }}^H} \operatorname{diag}({{\bf{H}}_{r,k}}{{\bf{w}}_{b,k}}){\bf{a}} + {u_k}{{\bf{a}}^T} \operatorname{diag}({\bf{w}}_{b,k}^H{{\bf{H}}_{r,k}}^H){\bm{\varphi }}- {{\bf{a}}^T}u_k^H{u_k}{{\bf{\Phi }}^H}{{\bf{H}}_{r,k}}\sum\limits_{m = 1}^K {{{\bf{w}}_{b,m}}{\bf{w}}_{b,m}^H} {\bf{H}}_{r,k}^H{\bm{\varphi }} 
 - u_k^H{u_k}{{\bm{\varphi }}^H}{{\bf{H}}_{r,k}}\sum\limits_{m = 1}^K {{{\bf{w}}_{b,m}}{\bf{w}}_{b,m}^H} {\bf{H}}_{r,k}^H{\bf{\Phi a}} + u_k^H{u_k}{{\bf{a}}^T}{{\bf{\Phi }}^H}{{\bf{H}}_{r,k}}\sum\limits_{m = 1}^K {{{\bf{w}}_{b,m}}{\bf{w}}_{b,m}^H} {\bf{H}}_{r,k}^H{\bf{\Phi a}} - u_k^H{u_k}\sum\limits_{m = 1}^K {{\bf{h}}_{r,k}^H{\bf{\tilde A}}{{\bf{w}}_{r,m}}{\bf{w}}_{b,m}^H} {{\bf{H}}_{r,k}}^H{\bf{\Phi a}} 
- u_k^H{u_k}{{\bf{a}}^T}{{\bf{\Phi }}^H}{{\bf{H}}_{r,k}}\sum\limits_{m = 1}^K {{{\bf{w}}_{b,m}}{\bf{w}}_{r,m}^H} {{{\bf{\tilde A}}}^H}{{\bf{h}}_{r,k}}$.
Therefore, we have 
${f_3}({\bf{A}}) 
= {\bf{r}}_1^H{\bf{a}} + {{\bf{a}}^T}{{\bf{r}}_1} + {{\bf{a}}^T}{{\bf{r}}_2} + {\bf{r}}_2^H{\bf{a}} + {{\bf{a}}^T}{{\bf{R}}_1}{\bf{a}} + {\bf{r}}_3^H{\bf{a}}+ {{\bf{a}}^T}{{\bf{r}}_3} + \frac{1}{{2\rho }}({\bf{r}}_4^T{\bf{a}} + {{{r}}_5})$,
where ${\bf{r}}_1$, ${{\bf{r}}_2}$,  ${{\bf{r}}_3}$,  ${{\bf{r}}_4}$, and ${{{r}}_5}$ are given in \eqref{equ: auxiliary parameters of A}.

Similarly, the objective function of \eqref{pro: MMSE A + A tilde} can be expressed as
\begin{align} \label{equ: f_2 of A tilde in appendix C}
  f_4(\tilde{{\bf{A}}}) = \sum_{k=1}^{K} \alpha_k \lambda_k e_k(\tilde{{\bf{a}}}) + \frac{1}{2 \rho} ||{\bf{A}} - \tilde{\bf{A}}\tilde{\bf{A}}^{H} ||^2_{{F}},
\end{align}
 with $e_k(\tilde{{\bf{a}}})$ given by
\begin{align}
{e_k}(\tilde{{\bf{a}}}) 
% &= u_k^H{u_k}\sum\limits_{m = 1}^K {{{\bf{h}}^H_{r,k}}{\bf{\tilde A}}{{\bf{w}}_{r,m}}{\bf{w}}_{b,m}^H} {\bf{H}}_{r,k}^H{\bf{(}}{{\bf{I}}_N}{\bf{ - A)\bm{\varphi} }} 
% \nonumber\\
% &- u_k^H{\bf{h}}_{r,k}^H{\bf{\tilde A}}{{\bf{w}}_{r,m}} - {\bf{w}}_{r,m}^H{{{\bf{\tilde A}}}^H}{{\bf{h}}_{r,k}}{u_k} \nonumber\\
% &
% + u_k^H{u_k}{{\bm{\varphi }}^H}{\bf{(}}{{\bf{I}}_N}{\bf{ - A)}}{{\bf{H}}_{r,k}}\sum\limits_{m = 1}^K {{{\bf{w}}_{b,m}}{\bf{w}}_{r,m}^H} {{{\bf{\tilde A}}}^H}{{\bf{h}}_{r,k}} \nonumber\\
% &+ u_k^H{u_k}\sum\limits_{m = 1}^K {{\bf{h}}_{r,k}^H{\bf{\tilde A}}{{\bf{w}}_{r,m}}{\bf{w}}_{r,m}^H} {{{\bf{\tilde A}}}^H}{{\bf{h}}_{r,k}} \nonumber\\
& = u_k^H{u_k}{{\bf{h}}^H_{r,k}}{\bf{\tilde A}}\sum\limits_{m = 1}^K {{{\bf{w}}_{r,m}}{\bf{w}}_{b,m}^H} {\bf{H}}_{r,k}^H{\bf{(}}{{\bf{I}}_N}{\bf{ - A){\bm{\varphi }} }} \nonumber\\
&- u_k^H{\bf{h}}_{r,k}^H{\bf{\tilde A}}{{\bf{w}}_{r,k}} - {\bf{w}}_{r,k}^H{{{\bf{\tilde A}}}^H}{{\bf{h}}_{r,k}}{u_k} \nonumber\\
&+ u_k^H{u_k}{{\bm{\varphi }}^H}{\bf{(}}{{\bf{I}}_N}{\bf{ - A)}}{{\bf{H}}_{r,k}}\sum\limits_{m = 1}^K {{{\bf{w}}_{b,m}}{\bf{w}}_{r,m}^H} {{{\bf{\tilde A}}}^H}{{\bf{h}}_{r,k}}
\nonumber\\
& + u_k^H{u_k}{\bf{h}}_{r,k}^H{\bf{\tilde A}}\sum\limits_{m = 1}^K {{{\bf{w}}_{r,m}}{\bf{w}}_{r,m}^H} {{{\bf{\tilde A}}}^H}{{\bf{h}}_{r,k}}.
\end{align}
Moreover,  the last term of \eqref{equ: f_2 of A tilde in appendix C} can be expressed as
\begin{align}
  \frac{1}{2 \rho} ||{\bf{A}} - \tilde{\bf{A}}\tilde{\bf{A}} ^{H} ||^2_{{F}} 
  % &= \frac{1}{2 \rho} \operatorname*{Tr}\left( ({\bf{A}}^{H} - \tilde{\bf{A}}\tilde{\bf{A}} ^{H}   ) ({\bf{A}} -\tilde{\bf{A}}\tilde{\bf{A}} ^{H}  )\right)\nonumber \\
  & = \frac{1}{2 \rho} \left( \operatorname*{Tr}({\bf{A}} + \tilde{\bf{A}}\tilde{\bf{A}} ^{H}) - 2\operatorname*{Tr}({\bf{A}}\tilde{\bf{A}}\tilde{\bf{A}} ^{H} )\right)
   \nonumber \\
  & = \frac{1}{2 \rho} (a - {\tilde{\bf{a}}^{H}}({\bf{I}}_{a}\otimes {\bf{A}})\tilde{\bf{a}} ).
\end{align}
Therefore, we have \eqref{pro: A}, \eqref{equ: auxiliary parameters of A}, and \eqref{pro: tilde A}, which completes the proof.
\vspace{-5pt}
\bibliographystyle{IEEEtran}
\bibliography{IEEEabrv, Reference}
\end{document}